\theoremstyle{definition}
\newtheorem{definition}{Definition}
\theoremstyle{assumption}
\newtheorem{theorem}{Theorem}
\theoremstyle{remark}
\newtheorem{remark}{Remark}
\newtheorem{assumption}{Assumption}
\newtheorem{lemma}{Lemma}
\newtheorem{condition}{Condition}
\theoremstyle{plain}
\numberwithin{equation}{section}
\DeclarePairedDelimiter{\autobracket}{(}{)}
\newcommand{\br}[1]{\autobracket*{#1}}
\DeclarePairedDelimiter{\Prfences}{\lbrack}{\rbrack}
\newcommand{\pr}[1]{\Prfences*{#1}}
\DeclarePairedDelimiter\Brace{\lbrace}{\rbrace}
\newcommand{\bra}[1]{\Brace*{#1}}
\DeclarePairedDelimiter\nrm{\lVert}{\rVert}
\newcommand{\Rbb}{\mathbb{R}}
\newcommand{\Ebb}{\mathbb{E}}
\newcommand{\Pbb}{\mathbb{P}}
\newcommand{\cF}{\mathcal{F}}
\begin{document}
	\author{Chao Deng, Xizhi Su and Chao Zhou}	
	\date{\today}
	\title{Relative wealth concerns with partial information and heterogeneous priors}
	\maketitle

\begin{abstract}
We establish a Nash equilibrium in a market with $ N $ agents with the performance criteria of relative wealth level when the market return is unobservable. Each investor has a random prior belief on the return rate of the risky asset. The investors can be heterogeneous in both the mean and variance of the prior. By a separation result and a martingale argument, we show that the optimal investment strategy under a stochastic return rate model can be characterized by a fully-coupled linear FBSDE. Two sets of deep neural networks are used for the numerical computation to first find each investor's estimate of the mean return rate and then solve the FBSDEs. We establish the existence and uniqueness result for the class of FBSDEs with stochastic coefficients and solve the utility game under partial information using deep neural network function approximators. We demonstrate the efficiency and accuracy by a base-case comparison with the solution from the finite difference scheme in the linear case and apply the algorithm to the general case of nonlinear hidden variable process. Simulations of investment strategies show a herd effect that investors trade more aggressively under relativeness concerns. Statistical properties of the investment strategies and the portfolio performance, including the Sharpe ratios and the Variance Risk ratios (VRRs) are examed. We observe that the agent with the most accurate prior estimate is likely to lead the herd, and the effect of competition on heterogeneous agents varies more with market characteristics compared to the homogeneous case.\\

\noindent
\textbf{Keywords}: Portfolio allocation; Relative wealth concerns; Partial information; FBSDE; Deep neural networks.\\

\noindent
\textbf{Mathematics Subject Classification (2010):} 60H10, 91A15, 91G80\\

\noindent
\textbf{JEL Classification:} G11, C73\\

\end{abstract}

\setcounter{tocdepth}{1}
\tableofcontents


\section{Introduction}
This paper contributes to the theory of both portfolio optimization under partial information and the relative wealth criteria and of forward backward stochastic differential equations (FBSDEs for short). For the former, we establish a system of stochastic equations with the solution corresponding to the value function and the optimal control. The information is updated by a general filter, which could be nonlinear. We show the uniqueness of the solution to the fully coupled multi-dimensional FBSDE under certain assumption on the boundedness of the generator coefficients. We are the first to use the deep learning method to solve for the portfolio allocation strategy for a utility game, and explicitly exam the strategies under various of market conditions, investor's risk preferences and the informational hetegegeneity. In combination with martingale approach, the deep learning algorithm explores the structural features of the controlled process. On the theoretical aspect, we are the first to use the variational FBSDE in the multi-dimensional case to solve for the N-equation system that characterizes the Nash equilibrium for the utility game. This transformation motivates future applications to analyzes of coupled systems such as the mean field game under a general non-Markovian setting. Simulation of the optimal portfoliio and the wealth process reveals novel insights on the both the effect of information heterogeneity under the relativeness utility. The investors interact through competiton, and the investor with the most accurate information is likely to be the leader.

In practice, a fund can use the market average as a benchmark and measure its performance by how much it overperforms. This is the case of an investment $ \textit{with competition} $. We also refer to information incompeletenss as $ \textit{partial information} $. Our market consists of a risk-free bond and $ d $ stocks, $ S = (S^1, ..., S^d) $, for some integer $ d < \infty $. For simplicity, we assume the risk-free interest $ r = 0 $. The stock prices are continuous processes adapted to the filtration $ \mathcal{F}_t $ on a filtered probability space $ (\Omega, (\mathcal{F}_{t})_{t\leq T}, \mathbb{P}) $. Each stock $ S^i $ has a return rate that depends on the stock fundamentals, modeled by a hidden variable $ A_t \in C([0, T], \mathbb{R}^l)$. $ W $ and $ B $ are independent standard Brownian motions adapted to $ \mathcal{F} $, with $ W \in C([0, T], \mathbb{R}^d) $ and $ B \in C([0, T], \mathbb{R}^l) $ for an integer $ l < \infty $. The stock processes and hidden variable processes have the following dynamics: 
\begin{align}
\label{stock}
\frac{dS^i_t}{S^i_t} &= h^i(A_t)dt + \sum_{j=1}^{d}\sigma_w^{ij}dW^j_t + \sum_{j=1}^{l}\sigma_h^{ij}dB^j_t \quad (\text{observed}), \; \\
\label{hiddenprocess}
dA_t &= \mu(A_t)dt + m(A_t)dB_t \hspace{3cm} (\text{hidden}), \; 
\end{align} 
where the initial condition to the stochastic differential equation (SDE) \eqref{hiddenprocess}, denoted by $ A_0 $, is also unobserved and independent of the Brownian motions $ W $ and $ B $. The coefficients $ h(a) $, $ \mu(a) $ and $ m(a) $ are $ C^1 $ and Lipschitz continuous. These conditions ensure the existence and uniqueness of strong solution to the above SDEs. The dependence of stock returns on the hidden variable $ A_t $ is through the function $ h(a) $. We further assume the diffusion coefficient is positive definite and uniform elliptic.

 
The observable filtration is the one generated by the stock prices. We denote by $ \mathcal{F}^S_t $ for the $ \sigma $-algebra generated by $ (S_u)_{u \leq t} $. Clearly, $ \mathcal{F}^S \subset \mathcal{F} $. In the following discussion, we abbreviate $ h^i_t $ for $ h^i(A_t) $, and write $ \hat{h}^i_t = \Ebb\pr{h^i(A_t)|\mathcal{F}_t^S} $. Each investor is aware of the model of the hidden variable and the stock prices. Her initial belief on the distribution of return is a normally distributed random variable, which we denote by $ \hat{h}^i(A_0) \sim N(m^i, v^i)$ if investor $ i $'s initial belief is normal with mean $ m^i $ and variance $ v^i $.

To write stock prices under partial information in a complete market form, define the innovation process as 
\begin{equation*}
\nu^i_t = \int_{0}^{t}\br{\frac{dS^i_u}{S_u^i} - \hat{h}^i_u du} = \sigma^i \zeta^i_t,
\end{equation*}
where $ \zeta^i_t $ is an $ \mathcal{F}_t^S $-adapted standard Brownian motion. 

With the total variance,
\begin{equation*}
\sigma = \br{\sigma_w \sigma_w^\intercal + \sigma_h \sigma_h^\intercal}^{1/2}.
\end{equation*}
The stock dynamic (\ref{stock}) can be written as 
\begin{equation*}
\frac{dS_t}{S_t} = \hat{h}_t dt + \sigma d\zeta_t.
\end{equation*}

The objective of an individual investor or a fund manager is to find a portfolio allocation strategy over available assets such that it maximizes the expected utility, which depends on the wealth amount that exceeds the average of all investors at terminal time $ T $. The investors are of  CARA type, mathematically, the utility function is
\begin{equation}
U^i(X_T^i, \overline{X}_T) = -e^{-\frac{1}{\delta^i}\big(X_T^i - \theta^i \overline{X}_T\big)}, \quad \text{where} \quad \overline{X}_T = \frac{1}{N} \sum_{k = 1}^{N}X_T^k.
\end{equation}
The parameters $ \delta^i > 0 $ and $ \theta^i \in [0, 1] $ represent the $ i $-th agent’s absolute risk tolerance and competition weight. A high value of $ \delta $ implies a high risk tolerance which, in general, induces an aggressive investment strategy. The case $ \theta = 0 $ corresponds to an investor with no relative wealth concern. 

We adopt the convention for CARA utility to denote the dollar amount of investment by $ \pi_t \in \mathbb{R}^d $. We aim to identify a Nash equilibrium $ \uppi^* = (\pi^{i, *}_t, ... \pi^{N, *}_t)_{t \in [0, T]} $. The strategy is optimal in the sense that no one is better-off by unilaterally deviating from it. When the return process is linear Gaussian, we can derive the dynamics of estimated return rate, then use the PDE approach to solve the investor's problem. The value function depends on the Markovian state variables that consist of the investor's wealth and the estimated return rate. We then derive an HJB equation for the value function. For exponential utility, we can reduce the dimension of the PDE. The resulting PDE only depends on the spacial variable $ \hat{h}_t $. In other words, increasing the number of agents in the game does not increase the dimension of the problem. We obtain an analytical solution of the value function, hence the equilibrium strategy. The strategies of all investors can be solved through a linear system whose coefficients depend on risk preferences, observable market parameters, estimates of market returns, and the investment horizon.

BSDE is an essential tool for the problem of a single investor under partial information. In the case of a nonlinear hidden variable process, the mean return cannot be written as a deterministic function of any finite-dimensional state. Therefore, the control problem is non-Markovian. Using a non-standard martingale representation theorem, we can write an $ \mathcal{F}^S $- adapted martingale as a stochastic integral against the innovation process. We derived a non-Markovian one-dimensional BSDE similar to the one in \cite{hu2005utility}. Combining the one-dimensional BSDE derived for the single-agent problem, we obtain a multi-dimensional fully coupled FBSDE, by which the terminal condition for the unidimensional BSDE is endogeneously determined.
 
The Nash equilibrium is unique under certain assumptions on the market parameters and risk preferences. The uniqueness of equilibrium follows from the uniqueness of the FBSDE solution. Since the return parameter in the FBSDE comes from estimation, it is bounded if the investor has prior knowledge about the range of the true return process. Under this assumption, the generator $ f $ is linear and Lipschitz, so are the drift and volatility of the forward wealth process, which we denoted by $ g(t, \cdot) $ and $ \sigma(t, \cdot) $, respectively. We show the uniqueness of the solution for fully coupled FBSDE in this case. Furthermore, since $ \sigma(t, \cdot) $ can be degenerate, many of the existing results for the well-posedness of FBSDEs do not apply. However, observe an important feature of this FBSDE, that the forward equation does not depend on the $ Y $ component of the backward equation. We can apply the main theorem in \cite{zhang2006wellposedness} to establish the uniqueness and existence of the solution.

We numerically solve this multi-dimensional FBSDE by a deep learning method. The numerical scheme is conducted in two stages. In Stage I, we estimate the stock returns using an $ L^2 $ projection. The recurrent neural network (RNN) is used in order to exploit the time series feature of the input to facilitate the sequential learning. The RNN first produces the hidden state, which will be transformed by a linear map to the final output corresponding to the estimation at each discrete time step. The RNN as a function approximator takes the stock paths from time $ 0 $ up to time $ T $, as well as the investor's initial belief as the input. However, the estimation at each time step depends only on the past stock prices and the investor's initial belief. The estimated return process appears in the drift and diffusion terms of the forward equation and the generator of the backward equation. In Stage II, we solve the FBSDE, again using neural networks as function approximators. The algorithm in this step is similar to \cite{weinan2017deep}. The FBSDE coupling requires no extra care in designing the neural network structure. However, the loss function must include a terminal condition that is a function of the wealth process, which is also computed from the NN parameters. We denote by the terminal loss for the difference between the parametrized function and the forward simulation at the terminal time. Experiments of the deep learning scheme on different sets of model parameters show the efficiency and robustness of our method. The deep learning method is flexible in that it allows the nonlinear type of filters. Moreover, deep learning can be easily adapted to multi-asset cases where most numerical scheme fails due to the explosion of the number of grids as the problem dimension grows, also known as the curse of dimensionality.

Investment strategies are compared through time series statistics. With our choice of market and risk parameters, in the linear Gaussian case, the standard deviations of the absolute value of the investment strategy are larger when with full information. Competition increases the mean and the standard deviation. We also compute the coefficient of variation (CV) as the ratio between the Std and the Mean. On average of three agents, competition does not change the CV significantly, which means the increase of the volatility of the investment strategy mainly attributes to the increase of the value itself but not the variation cross time. The Sharpe ratio and VRR further illustrates the performance of the optimal strategies for the utility game. In case of nonlinear filters, we found that the CV is significantly smaller under partial information for the three sets market parameters, indicating the strategies is less volatile when the investors estimate returns from the market. Competition may reduce the average CV. The numerical results are presented in section \ref{numerical_results}.

Comparing the case with linear and nonlinear filter, we observe that the agent with the most accurate prior estimate is likely to lead the herd, and the effect of competition on heterogeneous agents varies more with market characteristics. More generally, agent heterogeneity exploits market properties. This finding provides extra reasons why we introduce the information heterogeneity with agents' interactions into the portfolio model.

\subsection{Literature review}
Competition among fund managers stemmed from various incentives, from career advances motives to purpose of seeking clients. The empirical works \cite{brown1996tournaments}, \cite{agarwal2009role} and \cite{brown2001careers} have documented the phenomenoen. However, competitions among a group of more than two managers have hardly be considered. Studies have also shown the importance of relative concerns in financial economics, including \cite{abel1990asset}, where the utility is one of the relative consumption levels. With a time-separable utility function, \cite{gomez2007impact} shows that the Joneses behavior yields portfolio bias when the agents face non-diversifiable risks. \cite{demarzo2008relative} examed empirical implications of the relative wealth concerns, providing an explanation to the finanical bubble. The work \cite{bielagk2017equilibrium} analyzed the effect of social interactions when a market derivative is traded to share the risk among investors with relative performance concerns. The above works yield the herd effect by risk sharing motives or relative utilities. A recent work that includes the private information into the model is \cite{qiu2017equilibrium}. It analyzed the informative trading of managers and the implications on the efficiency of asset prices using an one-period mean-variance criteria. The focus is on the price informativeness with different information structures. Our focus is on the dynamic investment strategy with information heterogeneous agents and therefore different from it.

The optimal investment problem was initiated by Merton in 1970s \cite{merton1975optimum} as part of the asset pricing theory. A mass body of literatures have been developed since then. Classicial works include \cite{pliska1986stochastic}, \cite{karatzas1987optimal}, and \cite{cox1989optimal}, among others. Duality approach was developed in \cite{kramkov1999asymptotic} and \cite{rogers2003duality} and has become a useful tool to solve incomplete market model, with \cite{papanicolaou2019backward} a recent application with partial information. Problems with portfolio constraints was in \cite{cvitanic1992convex} and \cite{zariphopoulou1994consumption}, among others. For other models with market frictions, the transaction cost case was considered in \cite{magill1976portfolio}, \cite{davis1990portfolio}, and \cite{ShreveSoner1994}. \cite{dai2011illiquidity} first studied portfolio problem with both transaction costs and position limits. 

The paper \cite{lacker2019mean} used the PDE approach to solve an N-agent game. They established a Nash equilibrium under which all agents maximize their utilities with relative performance concerns for a varity of standard time-separable utility functions. The HJB equation can be derived when the agents are of CARA or the CRRA type. However, the PDE method has its limitations. It is restricted to the case of the deterministic mean return process and may fail in our market model with partial information and information filtering. The idiosyncratic noise is specific for the individual stocks that is only available for a particular agent. In our model, the same set of stocks driven by the common noise are available to all agents. Our model can be modified without much difficulty to accomodate the case where the stocks are driven by the common noise, but agent $ i $ only invests in stocks $ i $ with return $ b^i $ and stock price $ S^i $.

Initial works on consumption-portfolio choice and asset pricing under partial information include \cite{gennotte1986optimal}. The separation principle holds in the linear Gaussian setting - the investor's optimal decision is equivalent to that of first estimate and then optimize. The HJB equation for the optimization step involves the estimated return as a new state variable. An early work \cite{detemple1986asset} derived the equilibrium asset price, while symmetric information was assumed. \cite{basak2005asset} built an asset pricing model with investors of heterogeneous beliefs, although the agents do not interact through the utility game as in our model.

Based on a martingale representation theorem, \cite{karatzas1991note} reduced the partial information portfolio optimization problem into a complete market problem. With linear Gaussian filtering, partial information was studied in \cite{brendle2006portfolio} where the loss of utility due to incomplete information was quantified. Partial information is also seen in \cite{lee2016pairs} for optimization of pair trade strategies, and \cite{wang2008kalman} for recursive optimization. 
\cite{pham2001optimal} used martingales and duality theory to the case of stochastic volatility, although without explicitly solving the optimal strategy. 

For information models, the nonlinear filter was less considered than the linear one. Wonham filter is used to estimate the states of a Markov chain. \cite{rieder2005portfolio} and \cite{sass2004optimizing} studied partial information with regime switching, with \cite{rieder2005portfolio} using the PDE approach and \cite{sass2004optimizing} using the martingale approach. The latter is in fact more general since it allows stochastic interest rates and multiple assets.

\cite{mania2010exponential} considered the exponential utiity under partial information in a general semi-martingale framework where the available information is part of the filtration generating the stock prices. It shows the equivalence to a new optimization problem formulated by the observable processes.
By reducing to complete market case, \cite{bjork2010optimal} solved explicitly optimal strategies for various utility functions under partial information. In addition to the Merton proportion, the strategy includes a hedging demand for the volatility of the return process. \cite{papanicolaou2019backward} used results from filtering, duality, and the BSDE theory to solve the investment problem, which includes the case of an unbounded mean return. It argues that the BSDE solution is the unique limit of solutions to a sequence of truncated problems with unique solutions obtained by a martingale representation theorem. In our model, extra difficulty arises due to the FBSDE coupling, and a uniqueness result for unbounded returns in the general setting could not be obtained similarly.

The work \cite{pardoux1992backward} established a correspondence between the path-dependent HJB equations and non-Markovian control problems. The technique of using BSDE to solve the portfolio maximization problem was first introduced by \cite{rouge2000pricing} and further developed by \cite{hu2005utility}. Both works consider the portfolio problem by indifference pricing for a contingent claim.  More recently, \cite{matoussi2015robust} applied BSDE approach to the robust utility maximization, and second-order BSDEs to the robust problem under volatility uncertainty. 

Compared to BSDEs, the theory of coupled FBSDE was developed more recently. Antonelli \cite{antonelli1993backward} first obtained the result on the solvability of an FBSDE over a “small” time duration. Later, the Four Step Scheme in \cite{ma1994solving} and the Method of Continuation in \cite{hu1995solution} and \cite{yong1997finding} were used to establish the well-posedness result on an arbitrary time duration. The main result used in this paper is from \cite{zhang2006wellposedness}, which covers the cases of the fully-coupled equation with a degenerate diffusion coefficient. Classical methods for solving the BSDEs includes the Monte Carolo method. More recently, \cite{weinan2017deep} introduced a numerical method using the deep neural networks (NNs). The neural networks approximate the conditional expectations after time-discretization. \cite{han2020convergence} proved the theoretical convergence of the deep learning (DL) algorithm for the coupled FBSDEs using properties of the discretized equations. A backward scheme that also treats the deep neural networks as function approximators was developed in \cite{Hur2020SomeML}, where theoretical convergence of the numerical scheme was also shown. A deep NN based method to solve the mean field game (MFG) is in \cite{carmona2019convergenceI} and \cite{carmona2019convergenceII}, respectively for the ergodic and the finite horizon case. A systematic study of the performance of the DL method for solving (F)BSDEs with varying hyperparameters and network structures is yet missing, much less is the convergence and the stability the optimization with stochastic gradient descent (SGD). In our paper, the solution to the FBSDE is unique, hence we regard the loss value as an indicator of the training accuracy, and the convergence is justified by the theoretical features of the FBSDE and the empirical success of the DL algorithm.

Using a martingale approach, \cite{hu2005utility} derived BSDEs for investors with portfolio constraints when return is stochastic and uniformly bounded. Due to the boundedness of coefficients, the existence and uniqueness result is classical. \cite{espinosa2015optimal} studied an investment problem with relative performance concerns and the argument relies on \cite{hu2005utility}. The paper identified the market average as the payoff of a contingent claim and solves it in the utility indifference pricing framework. The generator was quadratic due to the portfolio constraint. For deterministic mean returns, they were able to derive an analytical solution for the N-agent equilibrium and show the solution exists by verification. The uniqueness result for stochastic coefficients is not available since the FBSDE was fully coupled with quadratic generators. In our model, the equation is simplier with linear coefficients.

The following table summarizes the works and correponding methods mentioned above that are the most closely related to this paper.
\begin{table}[ht]
\caption{Comparison of works}
\resizebox{\textwidth}{!}
{\begin{tabular}{c c c c c}
\toprule
Paper & Hu, Imekeller, Muller & Espinosa, Touzi & Zariphopoulou, Lacker & This paper \\
\hline\hline 
Utility  & exponential, power, log  & general utilities & exponential, power, log & exponential \\
Game & No & Yes & Yes & Yes \\
Class of return process & Stochastic & Deterministic & Deterministic & Stochastic \\
Main method & BSDE & FBSDE & PDE & FBSDE \\
Analytical solution & No & Yes & Yes & No \\
Adapt to portfolio constraints & Yes & Yes & No & Yes \\
Numerical solution & No & Yes & Yes & Yes \\
Learning & No & No & No & Yes \\
\bottomrule
\end{tabular}}
\end{table}

\section{Market model and the control problem}
\subsection{Preliminaries on market model}
For simplicity and without loss of generality, we assume that risk-free interest rate $ r = 0 $. The price dynamics for stock $ i $ is
\begin{align}
\frac{dS_t^i}{S_t^i} &= h^i(A_t) dt + \sum_{j = 1}^d \sigma_w^{ij} dW_t^j + \sum_{j = 1}^l \sigma_h^{ij} dB_t^j,
\end{align} 
for $ i \in \{1, ..., d\} $. $ \sigma_w \in \Rbb^{d \times d} $ and $ \sigma_h \in \Rbb^{d \times l} $ are constants. The Browian motions $ W $ and $ B $ are $ \mathbb{R}^d $-valued and $ \mathbb{R}^l $-valued, respectively and independent of each other. The functions $ h^i $ are $ \mathbb{R} $-valued whose forms are to be specified.

Following the partial information model in \cite{papanicolaou2019backward}, we consider two classes of filters. Recall that the relation between asset return rate and the hidden variable is specified by the function $ h(a) $. The following examples considers different dimensionality of the hidden variable $ A_t $. \\

$ \textbf{Example 1} $ (Multi-dimensional case). Suppose $ A_t $ is $ \mathbb{R}^{d^L} $-valued. $ h(A_t) $ is linear in $ A_t $. Suppose $ A_t $ follows an Ornstein-Uhlenbeck (OU) process that reverts to a constant $ \bar{\bm{\mu}} \in \mathbb{R}^{d^L}$:
\begin{equation}
dA_t = - \lambda \br{A_t - \bar{\bm{\mu}}} dt + \sigma_a dB_t,
\end{equation}
where $ \bar{\bm{\mu}} \in \mathbb{R}^{d^L} $, $ \sigma_a \in \mathbb{R}^{d^L \times l} $ and $ B $ is the $ l $-dimensional standard Brownian motion. The function $ h $ is linear with $ h(a) = (w^L)^T a + c^L $ for $ w^L $ in $ \mathbb{R}^{d^L} $ and $ c^L \in \mathbb{R} $. Based on the stock prices, investors update beliefs according to a Kalman filter (KF)\footnote{Refer to \cite{liptser2013statistics} for a general theory of Kalman-Bucy filter.}.

Our framework of FBSDE is general enough to include the class of nonlinear filters, as in the following examples:

\textbf{Example 2} (One-dimensional case).
\begin{align}
dA_t = -\lambda (A_t - \bar{\mu})dt + \sigma_a \sqrt{(A_t - a_l)(a_u - A_t) }dB_t.
\end{align} 
\begin{remark}
In the above dynamic, $ A_t $ is essentially bounded between $ a_l $ and $ a_u $. 
\end{remark} 

\textbf{Example 3} (One-dimensional case).
The hidden variable $ A_t $ follows the Cox-Ingersoll-Ross (CIR) process that is mean reverting to $ \bar{\mu} $:
\begin{equation}
dA_t = - \lambda (A_t - \bar{\mu}) dt + \sigma_a \sqrt{A_t} dB_t.
\end{equation}


For the following discussion, $ \abs{\cdot} $ denotes the Euclidean norm in $ \mathbb{R}^m $, $ m \in \mathbb{N} $. For $ p > 0 $, $ L^p $ denotes the set of $ \mathcal{F}_T $ measurable random variables $ F $ such that $ \Ebb[|F|^p] < \infty $. For $ k \in \mathbb{N} $, $ H^k(\mathbb{R}^d) $ deonotes the set of all $ \mathbb{R}^d $-valued stochastic processes $ \phi $ that are predictable with respect to $ \mathbb{F} $ and such that $ \Ebb[\int_{0}^{T}|{\phi}|^k] < \infty $. $ H^\infty (\mathbb{R}^d) $ is the set of all $ \mathbb{F} $-predictable $ \mathbb{R}^d $-valued processes that are $ \lambda  \bigotimes \mathbb{P} $-a.e. bounded on $ [0, T] \times \Omega $, where $ \lambda $ is Lebesgue measure on $ \mathbb{R} $. Denote $ \mathcal{E}(X) $ for the exponential martingale of $ X $. 

Recall that for $ 1 \leq j \leq d $, the process $ \pi^j_t $ is the dollar amount invested in stock $ j $ at time $ t $. The number of shares to hold for stock $ j $ is therefore $ \frac{\pi_t^j}{S_t^j} $. 



%

\begin{assumption} (Uniformly elliptic)
The total variance $ \sigma \sigma^\intercal $ is bounded, i.e,
\begin{equation}
\frac{1}{\epsilon} \leq \sigma \sigma^\intercal \leq \epsilon < \infty.
\end{equation}
for some positive definite matrix $ \epsilon $.
\end{assumption}

\noindent
\begin{condition}[Novikov]\label{novikov}
The process $ {h}_t $, $ t \in [0, T] $ satisfies the Novikov condition:
\begin{equation}
\Ebb\pr{e^{\frac{1}{2 \epsilon^h} \int_{0}^{T}||{h}_t||^2 dt} }  < \infty.
\end{equation}
for some small constant $ \epsilon^h $.
\end{condition}

\begin{remark}
By $ h_t = h(A_t) $, the Novikov condition is satisfied if $ h(a) $ is a square-root or power function and $ A_t $ is essentially bounded. By Jensen's inequality,
\begin{align*}
\mathbb{E}\pr{e^{\int_{0}^{T}\frac{1}{2\epsilon^h}\nrm{\hat{h}_t}^2 dt }}
\leq  \frac{1}{T} \int_{0}^{T}\mathbb{E} e^{\frac{T}{2\epsilon^h}\nrm{h_t}^2} dt. 
\end{align*}
Hence, given the right-hand side is finite, all moments of the estimated return process are bounded. 
\end{remark} 

\subsection{Objective function under relative wealth concerns}
We consider the market with $ N $ investors, each with the constant absolute risk aversion (CARA), or the exponential risk preference. Investors are concerned about their performances valued by the wealth relative to the average of market investors at a future time $ T $. The market average is modeled by a random payoff at the terminal time in the utility function. Denote by $ \overline{X}_t $ the average wealth of investors at time $ t $, and $ X^i_0 = x^i \in \mathbb{R} $ the $ i $-th investor's initial wealth. Since we refer the problem as an N-agent utility game, we use the word agent and investors interchangeably in this paper.

The relative utility function for the CARA investor $ i $ is
\begin{equation*}
U^i(X_T^i, \overline{X}_T) = -e^{-\frac{1}{\delta^i}\big(X_T^i - \theta^i \overline{X}_T\big)}, \quad \text{where} \quad \overline{X}_T = \frac{1}{N} \sum_{k = 1}^{N}X_T^k.
\end{equation*}
The objective function for an arbitrary investor with the risk parameter $ (\delta, \theta) $ and wealth at time $ t $, $ X_t $ is 
\begin{equation}
J(t, \pi_1, ..., \pi_N) = \mathbb{E}\pr{- e^{-\frac{1}{\delta}(X_T - \theta \overline{X}_T)}}.
\end{equation}
where $ \delta > 0 $ is the $ \textit{personal risk tolerance} $. $ \theta \in [0, 1] $ is the investor's $ \textit{competition weight} $ parameter. In the following discussion, the superscript $ i $ indicates variables for investor $ i $. 

Write $ \tilde{X}_t^i = \frac{1}{N}\sum_{j\neq i} X_t^j$. By simple algebra, the objective can be written as
\begin{align}
J(t, \pi_1, ..., \pi_N) &= \mathbb{E}\pr{- e^{-\frac{1}{\delta^i}\br{(1 - \frac{\theta^i}{N})X_T^i - \theta^i \tilde{X}_T^i}}} \\\nonumber
& = \mathbb{E}\pr{- e^{-\frac{1}{\delta^i}\br{1 - \frac{\theta^i}{N}}\br{X_T^i - \frac{N \theta^i}{N - \theta^i}\tilde{X}_T^i}} }
\end{align}

The value function for agent $ i $ with initial wealth $ x^i $ is
\begin{equation}
V(0, x^i) = \underset{\pi_i \in \mathcal{A}^i}{\sup}\mathbb{E}\pr{- e^{-\frac{1}{\delta^i}\br{1 - \frac{\theta^i }{N}}\br{X_T^i - \frac{N \theta^i }{N - \theta^i }\tilde{X}_T^i}} }.
\end{equation}

\noindent
$ \textbf{Admissible set}. $
The set of admissible strategies $ \mathcal{A}^i $ for agent $ i $ is the set of all predictable processes $ \pi = (\pi_t)_{0 \leq t \leq T} $ such that (1) $ \Ebb\pr{\int_{0}^{T}\abs{\pi_t \sigma}^2 } < \infty $. (2) The set
\begin{equation*}
\bra{e^{\pm X_\tau^{i, \pi}}: \tau \text{ is a stopping time with value in } [0, T]}
\end{equation*}
is uniformly bounded in $ {L}^q(\mathbb{P})$ for all $ q > 0 $.

We define the Nash equilibrium as follows:
\begin{definition}[Nash equilibrium]
A vector $ (\pi^{1, *}, ..., \pi^{N,*}) $ of admissible strategies is a Nash equilibrium if, for all $ \pi^i \in \mathcal{A}^i $, $ i \in \{1, ..., N\} $, 
\begin{equation*}
J_i(\pi^{1,*}, ..., \pi^{i,*}, ..., \pi^{N,*} ) \geq J_i(\pi^{1,*}, ..., \pi^i, ..., \pi^{N,*} ).
\end{equation*}
\end{definition} 

\vspace{0.5em}
\section{Nash equilibrium by FBSDE}
For presentation simplicity, we set $ d = 1 $ for the rest of the discussion. The analysis would adapt to the multiple common stocks case with purely notational change. See remark \ref{rmk_multi_stock} for more details.
\subsection{Formal derivation of FBSDE}
Recall that when the terminal condition and $ b_t \coloneqq \frac{h_t}{\sigma} $ are bounded,  \cite{hu2005utility} has shown that the investor's value function and optimal strategy (without constraint) correspond to the solution $ (Y_t, Z_t) $ of the following BSDE:
\begin{equation}\label{udbsde}
Y_t = F - \int_{t}^{T}Z_sdW_s - \int_{t}^{T}f(s, Z_s)ds, \quad t \in [0, T],
\end{equation}
with the generator
\begin{equation}\label{udgen}
f(\cdot, z) = zb_t + \frac{\delta}{2 }|b_t|^2.
\end{equation}
and some bounded random variable $ F $ that is the terminal condition of the BSDE.

The optimal strategy is given by 
\begin{equation*}
\pi_t^* = \frac{p^*_t}{\sigma},
\end{equation*}
where $ p_t^* $ is linear in the $ Z_t $ component in the BSDE solution,
\begin{equation}
p_t^* = Z_t + \delta b_t, \quad t \in [0, T].
\end{equation}

The value function at the initial time is given by
\begin{equation}
V(x) = -e^{-\frac{1}{\delta} (x - Y_0)}.
\end{equation}

We now formulate the optimization problem under the relative performance criteria. The random variable $ F $ represents the benchmark market average at terminal time. The constant $ \alpha^i $ captures the risk preference and competition concern of the $ i $-th investor.  More explicitly, take $ F = \frac{N \theta}{N - \theta}\tilde{X}_T $, and $ \frac{1}{\delta} = \frac{1}{\delta^i}\big(1 - \frac{\theta^i}{N}\big) $ in \eqref{udbsde} and \eqref{udgen} , we obtain the objective function for the $ i $-th investor.

Suppose all agents adopt the strategies corresponding to $ p^* $, we can write explicitly the wealth process of agent $ i $ in terms of $ Z_t^i $ and estimated market parameters, 
\begin{align}
\label{xprocess}
X_t^i &= x^i + \int_{0}^{t} \sigma^{-1} p^{i,*}_u \frac{dS_u}{S_u} \\\nonumber
& = x^i + \int_{0}^{t} \br{Z_u^i + \delta b_u} \br{\frac{\hat{h}_u}{\sigma_u} du + d\zeta_u}. \nonumber
\end{align}

Since $ X^i $ may be unbounded, the terminal condition $ F $ does not satisfy the condition in \cite{hu2005utility}. Therefore, the correspondence between our optimization problem and the above BSDE is not immediate. We will show that under certain integrable conditions on the return rate, the solution to a single agent's investment problem is still characterized by the BSDE \eqref{udbsde}.

To write the system of BSDEs corresponding to each investor as a multi-dimensional equation, we introduce the vector notation:
\begin{align}
\bm{X} = \br{X^i}_{i \in \{1, ..., N\}}
\end{align} 
where $ X^i $ is for an arbitrary random variable that corresponds to the investor $ i $.

We introduce a matrix notation to compute the performance benchmark of the wealth amount.
\begin{align}
\bm{F} = A \bm{X}_T,
\end{align}
where $ A $ is an N-by-N matrix that does both averaging and taking into account the relative concern of a particular agent. To be more specific,
\begin{equation}\label{matrix_A}
A = 
\begin{bmatrix}
0 & \frac{\theta_1}{N - \theta_1} & \ldots & \frac{\theta_1}{N - \theta_1} \\
\frac{\theta_2}{N - \theta_2} & 0 & \ldots & \frac{\theta_2}{N - \theta_2} \\
\vdots & \vdots & \ddots & \vdots\\
\frac{\theta_N}{N - \theta_N} & \frac{\theta_N}{N - \theta_N} & \ldots & 0 
\end{bmatrix}.
\end{equation}

From now on, let $ f $ operates on $ \bm{z} $ componentwise, and let $ \circ $ denote the componentwise multiplication. If all agents solve the optimization problem, we can write
\begin{align}
f(\cdot, \bm{z}) = \bm{z} \circ \bm{b}_t + \frac{\bm{\delta}}{2} |\bm{b}_t|^2.
\end{align}
Notice that if $ \bm{b}_t $ is bounded, then $ f $ is Lipschitz. 

The above derivation suggests that any Nash equilibrium strategy for $ N $ agents with relative performance corresponds to the following multi-dimensional FBSDE:
\begin{align}
\label{mainforward}
\bm{X}_t & = \bm{x}_0 + \int_{0}^{t} \br{\bm{Z}_s \circ \bm{b}_s + \bm{\delta} \abs{\bm{b}_s}^2}ds + \int_{0}^{t} \br{\bm{Z}_s + \bm{\delta} \bm{b}_s} \circ d\bm{\zeta}_s, \\
\label{mainbsde}
\bm{Y}_t &= A \bm{X}_T - \int_{t}^{T}\bm{Z}_s \circ d\bm{\zeta}_s - \int_{t}^{T}f(s, \bm{Z}_s) ds,
\end{align}
where $ \bm{Z}_s \circ d \bm{\zeta}_s = \big(Z_s^i d\zeta^i_s \big)^\intercal_{i\in \{1, ..., N\}}$.

\begin{remark}\label{rmk_multi_stock}
As was previously mentioned, in the present setting, agents invest in the same set of stocks. Hence all components of $ \bm{b} $ and $ \bm{\zeta} $ are identical. In case agent $ i $ invests in stock $ i $ while all the stocks are still driven by the common noise, the components of $ \bm{b} $ and the Brownian motion $ \bm{\zeta} $ will depend on the agent $ i $'s individual stock.
\end{remark} 

\begin{theorem}
Suppose the return rate satisfies the Novikov condition. Let $ \hat{h}^i_t = \hat{h}(A_t) $ be the estimated mean return of the $ i $-th agent. If there exists a solution $ (Y_t, Z_t) $ to the FBSDE (\ref{mainforward} - \ref{mainbsde}), then there is a Nash equilibrium strategy  $ \uppi^* =(\pi^{1,*}, ..., \pi^{N,*}) $ given by
\begin{equation}
\uppi^* = \sigma^{-1}\br{\bm{Z}_t + \frac{\bm{\delta}}{\bm{b}_t}}. 
\end{equation}
\end{theorem}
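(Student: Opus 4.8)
The plan is to prove this as a verification theorem built on the martingale optimality principle, reducing the $N$-agent equilibrium to $N$ single-agent best-response problems. Given a solution $(\bm{Y}, \bm{Z})$ of the FBSDE \eqref{mainforward}--\eqref{mainbsde}, the forward equation already realizes $\bm{X}$ as the wealth vector generated by the candidate profile $\uppi^*$, while the terminal condition $\bm{Y}_T = A\bm{X}_T$ pins down, for each $i$, the benchmark $F^i := \frac{N\theta^i}{N-\theta^i}\tilde{X}_T^i$ consistently with the actual terminal wealth of the other agents. The crux of the equilibrium property is that I fix agent $i$ and treat $(\pi^{j,*})_{j\neq i}$ --- hence $\tilde{X}_T^i$ and $F^i$ --- as given; it then suffices to show that $\pi^{i,*}$ is the optimal response of agent $i$ to this fixed (random, possibly unbounded) benchmark.

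For the single-agent step I would follow \cite{hu2005utility}. Introduce, for an arbitrary admissible $\pi$, the process $R_t^\pi = -\exp\!\big(-\tfrac{1}{\delta}(X_t^{i,\pi} - Y_t^i)\big)$ with $\tfrac{1}{\delta} = \tfrac{1}{\delta^i}(1-\tfrac{\theta^i}{N})$ and $(Y^i, Z^i)$ the $i$-th component of the BSDE. By construction $R_T^\pi = -\exp\!\big(-\tfrac{1}{\delta}(X_T^{i,\pi} - F^i)\big)$ equals agent $i$'s realized utility, while $R_0^\pi = -\exp\!\big(-\tfrac{1}{\delta}(x^i - Y_0^i)\big)$ is independent of $\pi$. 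Applying It\^o's formula to $R^\pi$ and using the generator $f(\cdot,z) = z b_t + \tfrac{\delta}{2}|b_t|^2$, the drift is (up to the negative factor $R^\pi$) a completed square that is nonpositive for every $\pi$ and vanishes exactly at $p^* = Z^i + \delta b$; this is precisely why $f$ was chosen. Thus $R^\pi$ is a local supermartingale for all $\pi$ and a local martingale for $\pi = \pi^{i,*}$. Once the true (super)martingale property is established, $\mathbb{E}[U^i] = \mathbb{E}[R_T^\pi] \le R_0^\pi = \mathbb{E}[R_T^{\pi^{i,*}}]$, yielding optimality of the best response and identifying $\pi^{i,*} = \sigma^{-1}(Z^i + \delta b)$, which is the asserted form.

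The main obstacle is exactly the gap between \emph{local} and \emph{true}: because $F^i$ and $X_T^i$ are unbounded, the hypotheses of \cite{hu2005utility} do not apply verbatim, and I must supply the integrability that upgrades the local supermartingale to a genuine supermartingale and makes $R^{\pi^{i,*}}$ uniformly integrable, so that no mass is lost at $T$. Here I would invoke the Novikov condition (Condition \ref{novikov}), which together with Assumption~1 controls the exponential moments of $b_t$ and of $\hat{h}_t$, and admissibility condition (2), namely that $\big\{e^{\pm X_\tau^{i,\pi}}\big\}$ is bounded in every $L^q(\mathbb{P})$ uniformly over stopping times $\tau$. A H\"older and de la Vall\'ee-Poussin argument then gives uniform integrability of the relevant stochastic exponentials and of $R_t^\pi$ along a localizing sequence, closing the supermartingale inequality for all admissible $\pi$ and the martingale identity for $\pi^{i,*}$.

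Finally I would verify that $\pi^{i,*} = \sigma^{-1}(Z^i + \delta b)$ is itself admissible: the bound $\mathbb{E}\int_0^T |\pi^{i,*}\sigma|^2 < \infty$ follows from $Z^i \in H^2$ together with the Novikov-type moment bounds on $b$, and the uniform $L^q$ bound on $e^{\pm X_\tau^{i,*}}$ follows from the explicit wealth dynamics \eqref{xprocess}; hence $\pi^{i,*}\in\mathcal{A}^i$. Since the above holds for each $i$ with the other components of $\uppi^*$ held fixed, no agent can improve by a unilateral deviation, so $\uppi^*$ is a Nash equilibrium in the sense of the definition.
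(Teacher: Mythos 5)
Your proposal is correct and follows essentially the same route as the paper: the Hu--Imkeller--M\"uller martingale optimality principle applied agent by agent with the opponents' strategies (hence the benchmark $F^i$) frozen, the completed square identifying $p^*_t = Z_t + \delta b_t$, and the upgrade from local to true (super)martingale via the Novikov condition, the uniform $L^q$-bounds on $e^{\pm X_\tau}$ from the admissibility definition, and H\"older's inequality, followed by the admissibility check for $p^*$. The only point the paper treats explicitly that you gloss over is the filtration subtlety: the local (super)martingale statements are first obtained relative to $\mathcal{F}^\zeta$ and must be transferred to the observation filtration $\mathcal{F}^S$, which the paper handles with Lemma~\ref{lemmafiltration}; your argument works in the same way once that lemma (or the fact that $\zeta$ is an $\mathcal{F}^S$-Brownian motion) is invoked.
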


In order to construct the investment strategy by the martingale approach under the observable filtration, we need the following lemma.
\begin{lemma}\label{lemmafiltration}
	Every martingale in $ \mathcal{F}^\zeta $ is a martingale in $ \mathcal{F}^S $.
\end{lemma}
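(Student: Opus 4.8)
The plan is to route the proof through the predictable representation property of the innovation process, rather than attempting to show that the two filtrations coincide. The starting observation is that $\zeta = \sigma^{-1}\int_0^\cdot (dS_u/S_u - \hat{h}_u\,du)$ is assembled from the observable price path together with its own filter $\hat{h}$, both of which are $\mathcal{F}^S$-adapted; consequently $\mathcal{F}^\zeta_t \subseteq \mathcal{F}^S_t$ for every $t$. This inclusion alone is not enough, since the martingale property need not survive an enlargement of the filtration. The substantive input I would invoke is the innovations theorem of Fujisaki--Kallianpur--Kunita: the innovation process $\zeta$ is a Brownian motion not merely in its own filtration but with respect to the larger observation filtration $\mathcal{F}^S$.

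Granting this, the argument is short. Let $M$ be an $\mathcal{F}^\zeta$-martingale; by localization and $L^2$-approximation it suffices to treat the square-integrable case. Since $\mathcal{F}^\zeta$ is the augmented natural filtration of the Brownian motion $\zeta$, the Brownian martingale representation theorem furnishes an $\mathcal{F}^\zeta$-predictable $\phi$ with $\Ebb\int_0^T |\phi_s|^2\,ds < \infty$ and $M_t = M_0 + \int_0^t \phi_s\,d\zeta_s$. Because $\mathcal{F}^\zeta \subseteq \mathcal{F}^S$ the integrand $\phi$ is also $\mathcal{F}^S$-predictable, and because $\zeta$ is an $\mathcal{F}^S$-Brownian motion the $L^2$ stochastic integral $\int_0^\cdot \phi_s\,d\zeta_s$ is a genuine $\mathcal{F}^S$-martingale. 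Hence $M$, being this integral plus a constant, is an $\mathcal{F}^S$-martingale, which is exactly the claim.

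The main obstacle is the innovations step, i.e. verifying that $\zeta$ is an $\mathcal{F}^S$-Brownian motion. This is where filtering theory does the work: one first checks that $\nu_t = \int_0^t(dS_u/S_u - \hat{h}_u\,du)$ is an $\mathcal{F}^S$-martingale, which holds because its drift increment has zero $\mathcal{F}^S$-conditional mean once the true drift $h_u$ is replaced by its filtered value $\hat{h}_u = \Ebb[h_u \mid \mathcal{F}^S_u]$; one then computes the quadratic variation of $\nu$ to be $\sigma\sigma^\intercal t$, so that after multiplying by $\sigma^{-1}$ — which exists by the uniform ellipticity assumption — Lévy's characterization identifies $\zeta$ as an $\mathcal{F}^S$-Brownian motion. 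A secondary, routine point is the reduction from a general to a square-integrable martingale and the integrability needed to upgrade $\int \phi\,d\zeta$ from a local to a true $\mathcal{F}^S$-martingale; the moment bounds on $\hat{h}$ recorded after the Novikov condition make this standard.
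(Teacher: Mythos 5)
Your proof is correct, but it runs in the opposite direction from the paper's. The paper argues top-down: it forms the projection $M_t^S = \Ebb\pr{M_T \mid \cF^S_t}$ and invokes the nonstandard representation theorem for \emph{observation-filtration} martingales (cited from Bj\"ork et al.), namely that every $\cF^S$-martingale is a stochastic integral against the innovation process $\zeta$; from that representation it concludes $M^S$ is already an $\cF^\zeta$-martingale, hence equals $M$, so $M$ inherits the $\cF^S$-martingale property from $M^S$. You argue bottom-up: you represent $M$ itself via the \emph{classical} It\^o representation theorem in the Brownian filtration $\cF^\zeta$, and then use the innovations theorem (Fujisaki--Kallianpur--Kunita: $\zeta$ is a Brownian motion in the larger filtration $\cF^S$, which you correctly reprove via the zero conditional drift of $\nu$ plus L\'evy's characterization) to push the martingale property of $\int \phi\, d\zeta$ up to $\cF^S$. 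The trade-off: the paper's route is shorter on the page but buries the hard content in the cited representation theorem --- a genuinely deep result, since it holds even when $\cF^\zeta$ is strictly smaller than $\cF^S$ --- and its intermediate claim that a stochastic integral against $\zeta$ with a priori only $\cF^S$-predictable integrand is $\cF^\zeta$-adapted is stated loosely. Your route needs only elementary, standard ingredients (L\'evy characterization, classical Brownian representation, consistency of the It\^o integral under filtration enlargement), and each step is airtight; the cost is the extra bookkeeping you flag yourself --- the $L^2$/$L^1$ reduction to square-integrable terminal values and the integrability of $\hat h$, both of which are indeed routine given the Novikov condition. Both proofs ultimately rest on the same filtering fact, that the innovation process is an $\cF^S$-Brownian motion; the paper consumes it through the citation, you make it explicit.
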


\begin{proof}
	Let $ M $ be a $ \mathcal{F}^\zeta $-martingale. By a martingale representation theorem in \cite{bjork2010optimal}, 
	the martingale $ M_t^S = \mathbb{E}\pr{M_T | \mathcal{F}^S_t }$ has a unique representation:
	\begin{equation}
	M_t^S = \mathbb{E}\pr{M_T } + \int_{0}^{t}M_u^S \sigma^M_u d\zeta_u.
	\end{equation}
	Since $ M_t^S $ is also a stochastic integral against $ \zeta_t $, it is a martingale under $ \mathcal{F}^\zeta_t $. Alternatively, we can write it as  
	\begin{equation}
	M_t^S = \mathbb{E}\pr{M_T | \mathcal{F}^\zeta_t} = M_t.
	\end{equation}
	Since $ M^S $ is a $ \mathcal{F}^S $-martingale, so is $ M $.
\end{proof}

We next prove that the solution to the FBSDE we constructed is indeed a Nash equilibrium. The above lemma is used to show the martingale we will later construct is in a larger filtration $ \mathcal{F}^S $ than $ \mathcal{F}^\zeta $. From now on, we omit the super(sub)-script $ i $ when there is no ambiguity. 

\begin{proof}
We first show the solution to the unidimensional BSDE is the value function for a single agent's investment problem with terminal payoff $ F = \frac{N \theta}{N - \theta} \tilde{X}_T $. 

For investor $ i $, following the proof in \cite{hu2005utility}, first define a strategy $ p \in \mathcal{A} $ by 
\begin{equation}
R_t^{(p)} := -e^{- \frac{1}{\delta} (X_t^{(p)} - Y_t)}, \quad t \in [0, T],
\end{equation}
where $ Y_t $ is a component of the solution to the unidimensional BSDE 
\begin{equation}
Y_t = F - \int_{t}^{T}Z_sd \zeta_s - \int_{t}^{T}f(s, Z_s)ds, \quad t \in [0, T],
\end{equation}

Define
\begin{equation*}
M_t^{(p)} := - e^{-\frac{1}{\delta} (x - Y_0)} \mathcal{E}\br{- \frac{1}{\delta} \int_{0}^{t}\br{p_s - Z_s} d\zeta_s}.
\end{equation*}
which is a local martingale in $ \cF^\zeta $.


For
\begin{equation*}
C_t^{(p)} := e^{-\frac{1}{\delta} \int_{0}^{t}\br{b_s p_s - f(s, Z_s) - \frac{\delta}{2} \abs{p_s - Z_s}^2} ds}, \quad t \in [0, T],
\end{equation*}
we have 
\begin{equation*}
R_t^{(p)} = M_t^{(p)} C_t^{(p)}.
\end{equation*}

For $ R_t^{(p)} $ to be a local martingale for some $ p^* $ and a supermartingale for all $ p \in \mathcal{A} $, we need $ C_t^{(p)} $ decreasing and $ C_t^{(p^*)} = 1 $, $ \mu_L \bigotimes \mathbb{P} $-a.s. for $ \mu_L $ the Lebesgue measure on $ \mathbb{R} $ and some $ p^* $.

The exponent of $ C_t^{(p)} $ is a quadratic function in $ p $. Optimization yields,
\begin{equation}
f(\cdot, z) = z b_t + \frac{\delta}{2}\abs{b_t}^2,
\end{equation}
and 
\begin{equation}
p_t^* = Z_t + \delta b_t, \quad t \in [0, T].
\end{equation} 
Hence $ C_t^{(p^*)} = 1 $ and $ R_t^{(p^*)} = M_t^{(p^*)} $ is a local martingale in $ \mathcal{F}^\zeta $.  

Since $ b_t $ satisfies the Novikov condition, $ M_t^{(p^*)} $ and hence $ R_t^{(p^*)} $ are true martingales. The Novikov condition of $ b_t $ also implies that $ R_t^{(p^*)} $ is uniformly integrable. Hence, $ R_{\tau }^{(p^*)} $ is uniformly bounded in $ {L}^q(\Pbb) $, which implies that the strategy $ p^*_t $ is admissible.

It remains to show that $ R^{(p)} $ is a supermartingle for all $ p \in \mathcal{A} $ under $ \mathcal{F}^S $. Since the process $ M_t^{(p)} $ is a local martingale in $ \mathcal{F}^\zeta $, there exists a sequence of stopping times $ (\tau_n)_{n \in \mathbb{N}} $ such that $ \underset{n \to \infty}{\lim} \tau_n = T, \; \mathbb{P}$-a.s. and $ (M_{t \wedge \tau_n})_{t \in [0, T]} $ is a positive $ \mathcal{F}^\zeta $-martingale for each $ n \in \mathbb{N} $. By Lemma \ref{lemmafiltration}, they are $ \mathcal{F}^S $-martingales. The process $ \tilde{C}^{(p)} $ is decreasing. Thus $ R_{t \wedge \tau_n}^{(p)} $ is a $ \mathcal{F}^S $-supermartingale for each $ n $. That is, for $ s \leq t $,
\begin{equation*}
\mathbb{E}\pr{R_{t \wedge \tau_n}^{(p)}|\mathcal{F}^S_s} \leq R_{s \wedge \tau_n}^{(p)}.
\end{equation*} 

Equivalently, for any set $ U \in \mathcal{F}^S$, we have 
\begin{equation*}
\mathbb{E}\pr{R_{t \wedge \tau_n}^{(p)} \mathbbm{1}_U} \leq \mathbb{E}\pr{R_{s \wedge \tau_n}^{(p)} \mathbbm{1}_U}.
\end{equation*}

Notice that the admissible condition implies that $ e^{X_\tau} $ is uniformly bounded in $ {L}^q(\Pbb) $. By the forward equation for $ X $, we have
\[ e^{\int_{0}^{\tau} \br{Z_s b_s ds + Z_s d\zeta_s}} = e^{X_\tau - x_0 - \int_{0}^{\tau} \delta \abs{b_s}^2 ds - \int_{0}^{\tau} \delta b_s d\zeta_s} = e^{X_\tau} \cdot e^{- x_0 - \int_{0}^{\tau} \delta \abs{b_s}^2 ds - \int_{0}^{\tau} \delta b_s d\zeta_s} \]
is uniformly bounded in $ {L}^q(\Pbb) $. 

Furthermore, the solution for the backward equation $ Y $ has 
\begin{align*}
\abs{e^{Y_\tau}}^q &= \nrm{e^{Y_0 + \int_{0}^{\tau} Z_s d\zeta_s + \int_{0}^{\tau} \br{Z_s b_s + \frac{\delta}{2} \abs{b_s}^2}ds}}^q \\
& \leq e^{q C_0 Y_0} \cdot e^{q C_1 \int_{0}^{\tau} \br{Z_s d\zeta_s + Z_s b_s ds}} \cdot e^{q C_2 \int_{0}^{\tau} \frac{\delta}{2} \abs{b_s}^{2} ds} < \infty.
\end{align*}
Therefore, $ e^{Y_\tau} $ is uniformly bounded in $ {L}^q(\Pbb) $.


The process $ R_\tau^{(p)} $ is a constant multiple of the product of $ e^{- \frac{1}{\delta} X_\tau^{(p)}} $ and $ e^{\frac{1}{\delta} Y_\tau} $. Combining the definition of admissiblility and the above argument, we have 
\begin{equation}
\mathbb{E} [e^{q R_\tau^{(p)}}] \leq C \mathbb{E} [e^{- r \frac{1}{\delta} X_\tau^{(p)}}] \mathbb{E} [e^{r' \frac{1}{\delta} Y_\tau}] < \infty
\end{equation} 
where $ r = \frac{1 + q}{2} $ and $ \frac{1}{r} + \frac{1}{r'} = \frac{1}{q} $.

we have $ \bra{R_{t \wedge t_n}^{(p)}}_n $ and $ \bra{R_{s \wedge t_n}^{(p)}}_n $ are uniformly bounded in $ {L}^q(\Pbb) $ over $ n $ for all $ q > 0 $. Letting $ n \to \infty $, we obtain 
\begin{equation*}
\mathbb{E}\pr{R_t^{(p)}\mathbbm{1}_U} \leq \mathbb{E}\pr{R_s^{(p)}\mathbbm{1}_U},
\end{equation*}
which implies the supermartingale property of $ R^{(p)} $ as claimed. 
\end{proof}

\subsection{Explicit solution in case of linear filters}
In this section, we derive a PDE solution for linear Gaussian filters. Consider a model with the following stock dynamics, which corresponds to $ d = 1 $, $ h(a) = a $, $ \sigma_w = \sigma_S \sqrt{1 - \rho^2} $ and $ \sigma_h = \sigma_S \rho $, $ \mu(a) = -\lambda(a - \bar{\mu}) $, $ m(a) = \sigma_\mu $ in \eqref{stock} - \eqref{hiddenprocess}, 
\begin{align}\label{ou_stock}
\frac{dS_t}{S_t} &= A_t dt + \sigma_S \br{\sqrt{1 - \rho^2} dW_t + \rho dB_t}, \\
\label{ou_hidden}
dA_t &= - \lambda (A_t - \bar{\mu}) dt + \sigma_\mu dB_t, \quad t \in [0, T].
\end{align}
In the following discussion, we adopt the notation $ \hat{h}_t = \hat{h}(A_t) = \mathbb{E}[h(A_t)\big|\mathcal{F}_t^S \vee \hat{h}(A_0)] $. From filtering theory, the innovation process 
\begin{equation}
\nu_t = \int_{0}^{t} \br{\frac{dS_u}{S_u} - \hat{h}_u du}
\end{equation}
is a scaled Brownian motion under $ \mathcal{F}_t^S $, i.e, $ \zeta_t = \sigma^{-1} \nu(t) $ is a standard Brownian motion under $ \mathcal{F}_t^S $.

With the estimated return rate and the Brownian motion in a smaller filtration, the asset price can be regarded as one in the complete market. The price evolves as 
\begin{equation*}
\frac{dS_t}{S_t} = \hat{A}_tdt + \sigma_S d\zeta_t,
\end{equation*}
where the stochastic term is an $ \mathcal{F}_t^S $ adapted Brownian motion $ \zeta_t $.

We use the notation $ \Sigma(t) $ to indicate $ \Sigma $ is a deterministic function of $ t $. By the filtering theory, 
\begin{equation}
d\hat{A}_t = - \lambda ( \hat{A}_t - \bar{\mu}) dt +\br{\frac{\hat{\Sigma}(t) + \sigma_S \sigma_\mu \rho}{\sigma_S}} d\zeta_t,
\end{equation}
with $ \hat{A}_0 \sim N(\eta_0, \hat{\Sigma}(0)) $. In addition, the conditional variance $ \hat{\Sigma}(t) = \mathbb{E}\pr{(A_t - \hat{A}_t)^2 \big| \mathcal{F}_t^S \vee \hat{A}_0 } $ satisfies a Riccati ODE, with analytical solution 
\begin{equation}
\hat{\Sigma}(t) = \hat{\Sigma}(t; \Sigma_0) = \sqrt{k} \sigma_S \frac{k_1 e^{2 \frac{\sqrt{k}}{\sigma_S} t} + k_2}{k_1 e^{2 \frac{\sqrt{k}}{\sigma_S} t} - k_2} - \br{\lambda + \frac{\sigma_\mu \rho}{\sigma_S}} \sigma_S^2.
\end{equation}
The constants $ k $, $ k_1 $ and $ k_2 $ are
\begin{align*}
k &= \lambda^2 \sigma_S^2 + 2 \sigma_S \sigma_\mu \lambda \rho + \sigma_\mu^2, \\
k_1 &= \sqrt{k} \sigma_S + (\lambda \sigma_S^2 + \sigma_S \sigma_\mu \rho ) + \Sigma_0, \\
k_2 &= - \sqrt{k} \sigma_S + (\lambda \sigma_S^2 + \sigma_S \sigma_\mu \rho ) + \Sigma_0.
\end{align*}

Let $ (\alpha_t^i)_{t \in [0, T]}, i \in \{1, ..., N\} $ be an investment strategy among $ N $ investors. For the investor $ i $, denote the competitor's average by $ \alpha^{-i} $, where $ q^{-i}_t = \frac{1}{N} \sum_{j \neq i} q^j_t $ for arbitrary process $ q_t $.
\begin{align*}
dX^i_t &= \pi_t^i \hat{A}_t^i dt + \pi_t^i \sigma d\zeta_t,\\
d\tilde{X}^i_t &= (\alpha_t \hat{A}_t)^{-i} dt + \alpha^{-i}_t \sigma d\zeta_t.
\end{align*}

We can derive a PDE for $ V(t, x, y, \eta) = \underset{\pi \in \mathcal{A}}{\sup} \mathbb{E} \pr{J(t) \big | X_t = x, \tilde{X}_t = y, \hat{\mu}_t = \eta} $ and obtain the following results for the optimal strategy of the investors, from which we can compute the value function of each investor. We include the explicit expression for the value function in the appendix to save the space.

\begin{theorem}
The stock return rate satisfies the condition \ref{novikov}. $ \delta^i > 0 $, $ \theta^i \in [0, 1] $ for all $ i \in \{1, ..., N\} $, and define $ w_2^i = \frac{\delta^i}{\theta^i} $, the investor $ i $'s estimate of return is $ \eta^i $, $ i \in \{1, ..., N\} $. Then there exists a unique Nash equilibrium strategy among investors.

Define the following constants depending on $ i $,
\begin{multline*}
\mathfrak{m}^i = \frac{\delta^i}{\sigma_S^2 (1 - \frac{\theta^i}{N})} \Bigg\{\frac{1}{\sigma_S^2} w_2^i \br{\hat{\Sigma}(t) + \sigma_S \sigma_\mu\rho} \int_{t}^{T} e^{- \int_{t}^{s} \br{\lambda + \frac{\hat{\Sigma}(u) + \sigma_S \sigma_\mu\rho}{\sigma_S^2} }du} \\
 \cdot \int_{t}^{s}  \br{\hat{\Sigma}(u) + \sigma_S \sigma_\mu\rho} e^{- \int_{u}^{s} \br{\lambda + \frac{2(\hat{\Sigma}(m) + \sigma_S \sigma_\mu\rho)}{\sigma_S} - \frac{\hat{\Sigma}(m) + \sigma_S \sigma_\mu\rho}{\sigma_S^2}}dm} du ds + \frac{\theta^i}{1 - \frac{\theta^i}{N}}\Bigg\},
\end{multline*} 

and
\begin{multline*}
\beta^i = \frac{\delta^i}{\sigma_S^2 \br{1 - \frac{\theta^i}{N}}}\bigg\{\eta^i + \br{\hat{\Sigma}(t) + \sigma_S \sigma_\mu\rho} \int_{t}^{T} \br{- \frac{1}{\sigma_S^2}} e^{- 2 \int_{t}^{s} \br{\lambda + \frac{\hat{\Sigma}(u) + \sigma_S \sigma_\mu\rho}{\sigma_S}} du} \eta^i \\ + \frac{\hat{\Sigma}(u) + \sigma_S \sigma_\mu\rho}{\sigma_S^2} \int_{t}^{T} \int_{t}^{s} \lambda \bar{\mu} e^{- \int_{u}^{s} \br{\lambda + \frac{2(\hat{\Sigma}(m) + \sigma_S \sigma_\mu\rho)}{\sigma_S} - \frac{\hat{\Sigma}(m) + \sigma_S \sigma_\mu\rho}{\sigma_S^2}} dm} du  \\ \cdot e^{- \int_{t}^{s} \br{\lambda + \frac{\hat{\Sigma}(u) + \sigma_S \sigma_\mu\rho}{\sigma_S^2}} du} ds \bigg\}.
\end{multline*}
Let $ \mathcal{M} $ be the matrix with the $ i $-th row equals to $ \mathfrak{m}^i \bm{e}^i $, where $ (\bm{e}^i)_j = 1 - \delta_{\{i=j\}} $. The vector $ \upbeta $ with the i-th component $ \upbeta_i = \beta^i $. Then, the optimal strategy $ \uppi^* = (\pi^{*, i})_{i=1}^N $ can be expressed in terms $ \mathcal{M} $ and $ \upbeta $ as
\begin{equation*}
\uppi^* = \mathcal{M}^{-1} \upbeta.
\end{equation*}
\end{theorem}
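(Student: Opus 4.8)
The plan is to exploit the Markovian structure that the linear Gaussian filter confers and to solve the game by dynamic programming with an exponential ansatz. First I would record that, under \eqref{ou_stock}--\eqref{ou_hidden}, the Kalman--Bucy filter makes the triple $(X_t^i,\tilde X_t^i,\hat A_t)$ a Markov state, with $\hat A_t$ obeying the affine SDE already displayed and conditional variance $\hat\Sigma(t)$ given by the explicit Riccati solution. Fixing the competitors' feedback strategies $\{\alpha^j\}_{j\neq i}$, the benchmark $\tilde X^i$ has drift $(\alpha\hat A)^{-i}$ and volatility $\sigma_S\alpha^{-i}$; since all three state components are driven by the single Brownian motion $\zeta$, the second-order part of the generator is the square of the common diffusion vector, and I would write the corresponding HJB equation for agent $i$'s best-response value function $V^i(t,x,y,\eta)$.

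Next I would make the CARA ansatz $V^i=-\exp(-\kappa^i(x-c^i y)+g^i(t,\eta))$ with $\kappa^i=\frac{1}{\delta^i}(1-\theta^i/N)$ and $c^i=\frac{N\theta^i}{N-\theta^i}$, matching the terminal data so that $g^i(T,\cdot)=0$. Because $x$ and $y$ enter the exponent only linearly, the first-order condition in $\pi^i$ is a scalar quadratic whose minimizer is affine in $\eta$, in $g^i_\eta$, and in the competitor average $\alpha^{-i}$, with coefficients read off directly from the ansatz (here $\sigma_\eta(t)=(\hat\Sigma(t)+\sigma_S\sigma_\mu\rho)/\sigma_S$). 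Substituting the minimizer back collapses the HJB into a scalar parabolic PDE for $g^i$ on $(t,\eta)$; a quadratic ansatz $g^i(t,\eta)=a^i(t)+b^i(t)\eta+\tfrac12\gamma^i(t)\eta^2$ reduces it to a Riccati ODE for $\gamma^i$ and a linear ODE for $b^i$ with $\hat\Sigma(t)$-dependent coefficients, both solvable in closed form. The nested $\hat\Sigma$-integrals appearing in $\mathfrak m^i$ and $\beta^i$ are precisely the variation-of-constants solution of the $b^i$-equation.

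Then I would close the loop by imposing the Nash consistency condition $\alpha^{-i}=\frac1N\sum_{j\neq i}\pi^{*,j}$. Since every best response is affine in $\eta$, the competitor strategies are affine in $\eta$, so the source terms in the $b^i$-ODEs are linear combinations of the other agents' strategies; feeding the closed-form $g^i_\eta=b^i(t)+\gamma^i(t)\eta$ back into the best-response formula turns the $N$ equilibrium relations into the linear algebraic system $\mathcal M\uppi^*=\upbeta$ with the stated coefficients. Existence and uniqueness of the Nash equilibrium then reduce to invertibility of $\mathcal M$, which I would establish from $\theta^i\in[0,1]$ and $\delta^i>0$ keeping the off-diagonal coupling block strictly contractive, via a diagonal-dominance (Gershgorin) estimate, so that $\uppi^*=\mathcal M^{-1}\upbeta$ is the unique solution.

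Finally I would complete the verification: the candidate $V^i$ is smooth with the correct terminal value, and the candidate control is admissible because $\hat A_t$ is Gaussian and the strategies are affine in $\eta$, so Condition \ref{novikov} together with the moment bound in the preceding remark supplies the $L^q$ estimates required by the admissible set; a standard localization then shows $V^i$ dominates the payoff for every admissible control and is attained at the candidate, confirming that the fixed point is a genuine Nash equilibrium. The main obstacle is the self-consistent coupling in the middle step: the competitor wealth enters each agent's value-function PDE, so the reduced ODEs carry source terms involving the still-unknown equilibrium strategies, and one must propagate this dependence cleanly through the closed-form $b^i$ solution before the algebraic system can even be written down. It is here that the explicit bookkeeping of the $\hat\Sigma$-integrals and the quantitative control of the competition coupling needed to invert $\mathcal M$ both come into play.
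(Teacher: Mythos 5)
Your proposal follows essentially the same route as the paper's appendix proof: fix the competitors' average strategy, write the HJB equation for agent $i$'s best-response value function $V^i(t,x,y,\eta)$, factor out wealth via the exponential CARA ansatz, reduce to a scalar PDE in $(t,\eta)$ solved by a quadratic-in-$\eta$ ansatz whose coefficients satisfy closed-form (Riccati/linear) ODEs, and then close the Nash consistency condition into the linear system $\mathcal{M}\uppi^*=\upbeta$ (your time-dependent coefficients $a^i,b^i,\gamma^i$ are just a repackaging of the paper's two-parameter kernels $A(t,s),B(t,s),C(t,s)$). The one place you go beyond the paper is the explicit diagonal-dominance argument for invertibility of $\mathcal{M}$, which the paper leaves implicit (it only reports numerically that an iterative scheme for the linear system converges), so your treatment of the uniqueness claim is, if anything, slightly more complete.
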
 

The detailed calculation is in the Appendix. Observe that the analytical solution includes all the market parameters and risk preferences of all investors. In our numerical experiments, we set the number of agents $ N = 3 $. Given $ \hat{\Sigma}(0) \in \mathbb{R} $, we compute $ \hat{\Sigma}(t) $ for $ t \in[0, T] $, then compute $ \mathcal{M} $ and $ \upbeta $ by numerical integration. Agents can have identical or heterogeneous initial belief on the hidden parameter.  At each time step $ t_k $, instead of solving a linear system that involves a matrix inversion, we iteratively solve for strategy $ \alpha^i_{t_k} $ until $ \alpha $ converges. The convergence is within $ 3 $ iterations at each time step, and hence the method is efficient. 


\section{Existence and Uniqueness of the FBSDE solution}
To establish the existence and uniqueness of the FBSDE solution, we need to make some assumptions on the market parameters. Since the FBSDE is fully-coupled, we will first show that the unique solution exists on a small time interval, then apply a pasting argument to get a unique solution on an arbitrary time interval $ [0, T] $.

\begin{condition}[Competition parameters]\label{condA}
Let $ L_a $ be the norm of matrix $ A $ \eqref{matrix_A}, 
\begin{equation*}
L_a = \sup \bra{\nrm{Ax}_2: x \in \mathbb{R}^N \text{ with } \nrm{x}_2 = 1 }.
\end{equation*}
Suppose the following holds:
\begin{equation*}
L_a^2 e < 1, 
\end{equation*} 
where $ e $ denotes the natural number. 
\end{condition}

\begin{remark}
We estimate the matrix norm as 
\begin{align}
L_a^2 = \nrm{A}_2^2 \leq \nrm{A}_1 \nrm{A}_\infty = \br{\sum_{i = 1}^{N} \frac{\theta^i}{N - \theta^i} } \br{\max_{i} \frac{\theta^i (N - 1)}{N - \theta^i} } \leq& \br{\sum_{i = 1}^{N} \frac{\theta^i}{N - 1}} \br{\max_{i} \frac{\theta^i (N - 1)}{N - 1}} \\\nonumber
&\leq \max_i \br{\theta^i}^2.
\end{align}
So the condition is satisfied if $ \max_i \theta^i < e^{-0.5} \approx 0.6 $. The moderate level of $ \theta^i $ is a reasonable assumption in view of the weak interaction in the existing literature \cite{horst2005stationary}.
\end{remark}

\begin{theorem}
Assume Condition \ref{condA} and that $ (h_t)_{t\leq T} $ is a bounded process. There exists a unique solution to the FBSDE (\ref{mainbsde}).
\end{theorem}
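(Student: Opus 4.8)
The plan is to exploit the two structural simplifications that the boundedness hypothesis and Condition~\ref{condA} supply, and then to obtain well-posedness in the spirit of \cite{zhang2006wellposedness} through a small-time contraction followed by a pasting argument. First I would record that, since $ (h_t)_{t \leq T} $ is bounded, the process $ \bm{b}_t = \bm{h}_t/\sigma $ is bounded, say by a constant $ K $. Consequently the generator $ f(\cdot, \bm{z}) = \bm{z}\circ\bm{b}_t + \tfrac{\bm{\delta}}{2}\abs{\bm{b}_t}^2 $ is affine in $ \bm{z} $ and globally Lipschitz, and the drift $ \bm{z}\circ\bm{b}_s + \bm{\delta}\abs{\bm{b}_s}^2 $ and the diffusion $ \bm{z} + \bm{\delta}\bm{b}_s $ of the forward equation \eqref{mainforward} are likewise affine and Lipschitz in $ \bm{z} $. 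The terminal map $ \bm{X}_T \mapsto A\bm{X}_T $ is linear with Lipschitz constant $ L_a $. The decisive structural feature, already highlighted in the introduction, is that \eqref{mainforward} depends on $ \bm{Z} $ but not on the backward component $ \bm{Y} $; this is precisely the setting in which the main theorem of \cite{zhang2006wellposedness} applies even though the diffusion $ \bm{z}+\bm{\delta}\bm{b}_s $ may be degenerate.

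Next I would set up the Picard-type map $ \Phi $ on $ H^2(\Rbb^N) $ sending an input $ \bm{Z} $ to the martingale integrand of the solution obtained by first solving the forward SDE \eqref{mainforward} with $ \bm{Z} $ frozen in the coefficients to get $ \bm{X} $, and then solving the linear BSDE with terminal datum $ A\bm{X}_T $ to get $ (\bm{Y}, \bm{Z}') $, setting $ \Phi(\bm{Z}) = \bm{Z}' $. For two inputs the $ \bm{\delta}\bm{b} $ contributions cancel in the differences, so $ \Delta\bm{X}_T = \int_0^T \Delta\bm{Z}_s\circ\bm{b}_s\,ds + \int_0^T \Delta\bm{Z}_s\circ d\bm{\zeta}_s $; by the It\^o isometry and Cauchy--Schwarz, $ \Ebb\abs{\Delta\bm{X}_T}^2 \leq (1 + C h)\,\Ebb\int_0^T\abs{\Delta\bm{Z}_s}^2\,ds $ on an interval of length $ h $. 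Applying It\^o's formula to $ \abs{\Delta\bm{Y}}^2 $ together with the standard a priori BSDE estimate yields $ \Ebb\int\abs{\Delta\bm{Z}'_s}^2\,ds \leq (1 + C h)\,\Ebb\abs{A\Delta\bm{X}_T}^2 \leq (1 + C h)\,L_a^2\,\Ebb\int\abs{\Delta\bm{Z}_s}^2\,ds $. Hence $ \Phi $ is a contraction on a short interval whenever $ (1+Ch)L_a^2 < 1 $, and its fixed point is the unique local solution; crucially the admissible length $ h $ depends only on $ K $, $ \bm{\delta} $ and $ L_a $, not on the terminal data.

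Finally I would introduce the decoupling field $ u(t,\cdot) $ with $ \bm{Y}_t = u(t,\bm{X}_t) $, which is affine and satisfies $ u(T,x) = Ax $, and solve backward on $ [T-h,T] $, then on $ [T-2h,T-h] $ with new terminal datum $ u(T-h,\cdot) $, and so on. The subtlety --- and what I expect to be the main obstacle --- is that the effective terminal Lipschitz constant at each pasting step is the Lipschitz constant of $ u $, which can a priori grow as one integrates backward; a uniform $ h $ covers all of $ [0,T] $ only if this constant stays below the contraction threshold. Controlling this growth is exactly the content of Condition~\ref{condA}: the amplification incurred in transferring $ L_a $ through one coupled step is bounded by the factor $ e $ arising in the Gronwall/weighted-norm bookkeeping, so that $ L_a^2 e < 1 $ keeps the Lipschitz constant of $ u $ away from the singular threshold (equivalently, keeps the relevant matrix $ I - P(t) $ invertible), the decoupling field remains well-defined and uniformly Lipschitz, and finitely many pasting steps then cover $ [0,T] $. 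I expect this non-blow-up of the decoupling field --- rather than the routine per-interval estimates --- to be the crux, and it is precisely where the hypothesis $ L_a^2 e < 1 $ is consumed.
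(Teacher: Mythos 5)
Your overall skeleton (small-time contraction on the $\bm{Z}$-component, then a decoupling field and pasting) is the same as the paper's, and your per-interval estimate is sound: freezing $\bm{Z}$, solving \eqref{mainforward} forward, then solving the linear BSDE with terminal datum $A\bm{X}_{\delta_c}$, and contracting in the norm $\Ebb\int\abs{\cdot}^2$ is exactly the paper's map $\mathcal{T}$. (Incidentally, your bookkeeping yields the threshold $L_a^2<1$; the factor $e$ in Condition \ref{condA} is an artifact of the paper's choice of Young weights $\epsilon_1=\epsilon_2=2\delta_c\tilde{b}^2$ in that same per-interval estimate --- that is where the condition is consumed in the paper, not in the pasting.)

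The gap is the step you yourself call the crux. You assert that Condition \ref{condA} controls the growth of the Lipschitz constant of the decoupling field across pasting steps, with $e$ bounding the ``amplification incurred in transferring $L_a$ through one coupled step''. This mechanism cannot close the argument: if each backward step genuinely amplified the Lipschitz constant by a fixed factor $\kappa>1$, then after $n\sim T/\delta_c$ steps the constant would be of order $L_a\kappa^{n}$ and would exceed any contraction threshold once $\delta_c$ is small, no matter what $L_a$ is; if instead the amplification is $1+O(\delta_c)$ per step, the constant stays bounded by $e^{CT}L_a$ without any hypothesis on $L_a$, but the condition you would then need is $L_a^2e^{2CT}<1$, which depends on $T$ and is not Condition \ref{condA}. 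What actually makes the pasting work --- and what the paper proves with the operator $\bar{\nabla}$ and the variational FBSDE --- is that there is \emph{no} amplification at all: because $g$, $\sigma$, $f$ in \eqref{forward_g}--\eqref{backward_f} depend on $\bm{Z}$ only (never on $\bm{X}$ or $\bm{Y}$), shifting a solution by the constant triple $\br{\Delta x,\, A\Delta x,\, 0}$ again produces a solution, so by small-interval uniqueness the difference of two solutions started from $x^{(1)}$ and $x^{(2)}$ is exactly $\br{\Delta x,\, A\Delta x,\, 0}$, i.e.\ $\bar{\nabla}u(t)=A$ identically in $t$. The terminal Lipschitz constant is therefore exactly $L_a$ at every pasting time, the same $\delta_c$ works on every subinterval, and nothing beyond the per-interval contraction is needed for the iteration. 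You must state and prove this structural identity (the constant solution of the variational FBSDE); without it, your pasting argument does not go through as written.
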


\begin{proof}
Suppose $ \abs{b_t} = \abs{\frac{\hat{h}_t}{\sigma}} \leq \tilde{b} \in \mathbb{R} $. We first show the solution exists on a small interval, i.e, there exists a $ \delta_b $ s.t for $ \delta_c \leq \delta_b $, the FBSDE has a unique solution on $ S_{\delta_c}^2(\mathbb{R}^N) \times S_{\delta_c}^2(\mathbb{R}^N) \times H_{\delta_c}^2(\mathbb{R}^{N, d}) $.
Denote 
\begin{align}
\label{forward_g}
g(t, z) & = zb + \frac{1}{\alpha} \abs{b_t}^2, \\
\label{forward_sigma}
\sigma(t, z) & = z + \frac{1}{\alpha} b_t, \\
\label{backward_f}
f(t, z) & = zb + \frac{1}{2 \alpha} \abs{b_t}^2. 
\end{align}

Let $ \delta_h > 0 $ to be determined and $ \delta_c \in (0, \delta_h] $. Let $ x \in \mathbb{R}^N $ be fixed. We introduce the following norm 
\begin{equation}
\label{Nbar}
\nrm{(Y, Z)}_{\overline{\mathcal{N}}[0,\delta_c]} = \underset{t \in [0, \delta_c]}{\sup} \bra{\mathbb{E}\abs{Y_t}^2 + \mathbb{E}\int_{t}^{\delta_c}\abs{Z_t}^2 ds }^{1/2}.
\end{equation}

Let $ \overline{\mathcal{N}}[0,\delta_c] $ be the completion of $ \mathcal{N}[0, \delta_c] $ in $ H^2_{\delta_c}(\mathbb{R}^N) \times H^2_{\delta_c}(\mathbb{R}^{N, d}) $ under norm (\ref{Nbar}). Take any $ (Y^{(i)}, Z^{(i)}) \in  \overline{\mathcal{N}}[0, \delta_c] $, $ i = 1, 2 $, the SDE for $ X^{(i)} $ is:
\begin{align}\label{smallintforward}
dX^{(i)} &= g(t, Z^{(i)}) dt + \sigma(t, Z^{(i)}) d\zeta_t, \quad t \in [0, \delta_c],\\
X^{(i)}_0 & = x^{(i)}.
\end{align}

Since both $ g $ and $ \sigma $ are independent of $ x $, the above SDE has a unique strong solution $ X^{(i)} \in H^2_{\delta_c}(\mathbb{R}^N)$. Apply It\^{o}'s formula to $ \abs{X^{(1)}_t - X^{(2)}_t}^2 $, we then obtain the following estimate:
\begin{equation*}
\begin{split}
\mathbb{E} \abs{X^{(1)}_t - X^{(2)}_t}^2 & = \mathbb{E} \int_{0}^{t} \br{2 b_s \abs{X^{(1)}_s - X^{(2)}_s} \abs{Z^{(1)}_s - Z^{(2)}_s} + \abs{Z^{(1)}_s - Z^{(2)}_s}^2} ds\\
& \leq \mathbb{E} \pr{\int_{0}^{t} \epsilon_1 \abs{Z^{(1)}_s - Z^{(2)}_s}^2 ds  + \int_{0}^{t} \frac{{\tilde{b}}^2}{\epsilon_1} \abs{X^{(1)}_s - X^{(2)}_s}^2 ds + \int_{0}^{t} \abs{Z^{(1)}_s - Z^{(2)}_s}^2 ds} 
\end{split}
\end{equation*}
for some constant $ \epsilon_1 > 1 $.

By Gronwall's inequality, we have
\begin{equation}\label{diffx_by_diffz}
\mathbb{E} \abs{X^{(1)}_t - X^{(2)}_t}^2 \leq e^{\frac{1}{\epsilon_1}\int_{0}^{\delta_c} \tilde{b}^2 ds} \mathbb{E}\int_{0}^{\delta_c} (1 + \epsilon_1) \abs{Z^{(1)}_s - Z^{(2)}_s}^2 ds.
\end{equation}

Next, we solve the following BSDEs $ (i = 1, 2) $:
\begin{align}\label{smallintbsde}
d\bar{Y}^{(i)} &= f(t, \bar{Z}^{(i)}) dt + \bar{Z}^{(i)} d\zeta_t, \quad t \in [0, \delta_c], \\
\bar{Y}^{(i)}_{\delta_c} &= A X_{\delta_c}^{(i)}.
\end{align}
where $ X_{\delta_c}^{(i)} $ is simulated with process $ Y^{(i)} $ and $ Z^{(i)} $. Recall that we use $ A $ without the time index to denote a constant matrix.

Applying results to BSDEs with random coefficients, we see that the BSDE \eqref{smallintbsde} has a unique adapted solution $ (\bar{Y}^{(i)}, \bar{Z}^{(i)}) \in \mathcal{N}[0, \delta_c] \subset \bar{\mathcal{N}}[0, \delta_c] $. Define a map $ \mathcal{T}: \overline{\mathcal{N}}[0, \delta_c] \to \overline{\mathcal{N}}[0, \delta_c] $ by $ (Y^{(i)}, Z^{(i)}) \mapsto (\bar{Y}^{(i)}, \bar{Z}^{(i)}) $. Apply Ito's formula to $ \abs{\bar{Y}^{(1)}_t - \bar{Y}^{(2)}_t}^2 $, we have 
\begin{equation*}
\begin{split}
& \mathbb{E} \abs{\bar{Y}^{(1)}_t - \bar{Y}^{(2)}_t}^2 + \int_{t}^{\delta_c}  \abs{\bar{Z}^{(1)}_s - \bar{Z}^{(2)}_s}^2 ds \\
& \leq L_a^2 \mathbb{E} \abs{X^{(1)}_{\delta_c} - X^{(2)}_{\delta_c}}^2 + \int_{t}^{\delta_c} 2 b_s \abs{\bar{Y}^{(1)}_s - \bar{Y}^{(2)}_s} \abs{{Z}^{(1)}_s - {Z}^{(2)}_s} ds \\
& \leq L_a^2  e^{\frac{1}{\epsilon_1}\int_{0}^{\delta_c} \abs{b_s}^2 ds} \mathbb{E}\int_{0}^{\delta_c} (1 + \epsilon_1) \abs{Z^{(1)}_s - Z^{(2)}_s}^2 ds + \int_{t}^{\delta_c} 2 b_s \abs{\bar{Y}^{(1)}_s - \bar{Y}^{(2)}_s} \abs{{Z}^{(1)}_s - {Z}^{(2)}_s} ds \\
& \leq L_a^2  e^{\frac{1}{\epsilon_1}\delta_c \tilde{b}^2} \mathbb{E}\int_{0}^{\delta_c} (1 + \epsilon_1) \abs{Z^{(1)}_s - Z^{(2)}_s}^2 ds + \int_{t}^{\delta_c} \frac{\tilde{b}^2}{\epsilon_2} \abs{\bar{Y}^{(1)}_s - \bar{Y}^{(2)}_s}^2 ds + \epsilon_2  \int_{t}^{\delta_c} \abs{{Z}^{(1)}_s - {Z}^{(2)}_s}^2 ds 
\end{split}
\end{equation*}
where we used \eqref{diffx_by_diffz} in the second last inequality. Hence, 
\begin{equation*}
\begin{split}
& \mathbb{E} \abs{\bar{Y}^{(1)}_t - \bar{Y}^{(2)}_t}^2 + \int_{t}^{\delta_c}  \abs{\bar{Z}^{(1)}_s - \bar{Z}^{(2)}_s}^2 ds \\
& \leq e^{\frac{1}{\epsilon_2}\delta_c \tilde{b}^2 } \bra{\pr{\epsilon_2 + L_a^2  e^{\frac{1}{\epsilon_1} \delta_c \tilde{b}^2 } (1 + \epsilon_1) } \mathbb{E}\int_{0}^{\delta_c}  \abs{Z^{(1)}_s - Z^{(2)}_s}^2 ds} + \frac{ \tilde{b}^2}{\epsilon_2} \int_{t}^{\delta_c} \abs{\bar{Y}^{(1)}_s - \bar{Y}^{(2)}_s}^2 ds \\
& \leq e^{\frac{1}{\epsilon_2} \delta_c \tilde{b}^2 } \bra{\pr{\epsilon_2 + L_a^2  e^{\frac{1}{\epsilon_1} \delta_c \tilde{b}^2 } (1 + \epsilon_1) } \mathbb{E}\int_{0}^{\delta_c}  \abs{Z^{(1)}_s - Z^{(2)}_s}^2 ds} + \frac{ \tilde{b}^2}{\epsilon_2} \delta_c \sup_{t \leq \delta_c} \abs{\bar{Y}^{(1)}_t - \bar{Y}^{(2)}_t} \\ 
& \leq C(\epsilon_1, \epsilon_2, \delta_c) \nrm{(Y^{(1)}, Z^{(1)}) - (Y^{(2)}, Z^{(2)})}^2_{\overline{\mathcal{N}}[0, \delta_c]} 
\end{split}
\end{equation*}
for $ C(\epsilon_1, \epsilon_2, \delta_c) = \max \bra{e^{\frac{1}{\epsilon_2} \delta_c \tilde{b}^2} \pr{\epsilon_2 + L_a^2 e^{\frac{1}{\epsilon_1} \delta_c \tilde{b}^2} (1 + \epsilon_1)}, \frac{\tilde{b}^2}{\epsilon_2}} $.

Denote $ C(\epsilon_i) =  e^{\frac{1}{\epsilon_i} \delta_c \tilde{b}^2} $. By choosing
\begin{align*}
\epsilon_1 = \epsilon_2 = 2 \delta_c \tilde{b}^2,
\end{align*}
we have
\begin{align*}
C(\epsilon_1) = C(\epsilon_2) = \frac{1}{2}.
\end{align*}

Therefore,
\begin{equation*}
\begin{split}
C(\epsilon_1, \epsilon_2, \delta_c) & = \max \bra{e^{\frac{1}{2}} \br{\epsilon_2 + L_a^2 e^{\frac{1}{2}}}(1 + \epsilon_1), \frac{1}{2} }  \\
& \leq \max \bra{L_a^2 e + e^{\frac{1}{2}} \epsilon_2 + L_a^2 e  \epsilon_1 + e^{\frac{1}{2}} \epsilon_1 \epsilon_2, \frac{1}{2}} < 1 . 
\end{split}
\end{equation*}
for $ \delta_c $ small enough by condition (\ref{condA}).

By the contraction mapping theorem, there exists a unique fixed point $ (Y, Z) $ for $ \mathcal{T} $. In fact, $ (Y, Z) \in \mathcal{N}[0, \delta_c] $. Let $ X $ be the corresponding solution to the SDE \eqref{smallintforward}, $ (X, Y, Z) \in S_{\delta_c}^2(\mathbb{R}^N) \times S_{\delta_c}^2(\mathbb{R}^N) \times H_{\delta_c}^2(\mathbb{R}^{N, d}) $ is a unique adapted solution of (\ref{smallintbsde}) with forward process (\ref{smallintforward}). 

We next show that a unique solution exists for the problem with an arbitrary time horizon $ T > 0 $ by a pasting method. The argument requires the construction of a decoupling random field, which will be identified with the $ Y $ component in the FBSDE solution.

\vspace{0.5em}
To this end, denote $ \Theta \coloneqq (\bm{X}, \bm{Y}, \bm{Z}) $, and consider a FBSDE on a subinterval $ [t_1, t_2] $:
\begin{align}
\label{subfbsdex}
\bm{X}_t &= \bar{\eta} + \int_{t_1}^{t} g(s, \Theta_s) ds + \int_{t}^{t_1}\sigma(s, \Theta_s) d\zeta_s, \\
\label{subfbsdey}
\bm{Y}_t &= \phi(\bm{X}_{t_2}) - \int_{t}^{t_2} f(s, \Theta_s) ds - \int_{t}^{t_2} Z_s d\zeta_s, \quad t \in [t_1, t_2].
\end{align}
where $ \bar{\eta} \in L^2(\mathcal{F}_{t_1}) $ and $ \phi(x, \cdot) \in  L^2(\mathcal{F}_{t_2})$, for each fixed $ x $. In the Markovian case and if $ \sigma(t, \cdot) $ is independent of $ Z $, the solution $ Y_t = u(t, X_t) $ is a (viscosity) solution to a quasilinear PDE. For non-Markovian case as in the present setting, the solution $ Y_t $ is a random function of the stochastic process. This correspondence provides intuition to the following definition that is standard in the literature (\cite{ma2015well}).

\begin{definition}
A decoupling field of FBSDE (\ref{mainbsde}) is an $ \mathbb{F} $-progressively measurable random field $ u: [0, T] \times \mathbb{R} \times \Omega \mapsto \mathbb{R} $ with $ u(T, x) = h(x) $ if there exists a constant $ \delta_h > 0 $ such that, for any $ 0 = t_1 < t_2 \leq T $ with $ t_2 - t_1 \leq \delta_c $ and any $ \bar{\eta} \in L^2(\mathcal{F}_{t_1}) $, the FBSDE (\eqref{subfbsdex} - \eqref{subfbsdey}) with initial value $ \bar{\eta} $ and terminal condition $ u(t_2, \cdot) $ has a unique solution that satisfies $ Y_t = u(t, X_t) $ for $ t \in [0, T] $. Such decoupling field $ u $ is called \textit{regular} if it is uniformly Lipschitz in the spacial variable $ x $. 
\end{definition}
	
To construct a regular decoupling field, we look at a variational FBSDE. Omitting the subscript $ t $, fix processes $ Y^{(1)}, Y^{(2)}, X^{(1)}, X^{(2)} \in [0, T] \times \mathbb{R}^N $. The initial conditions for $ X^{(1)}, X^{(2)} $ are $ x^{(1)} $ and $ x^{(2)} $, respectively. Let $ i $ and $ j $ be the indices for vector components. 

Define the operator $ \bar{\nabla}: ([0, T] \times \mathbb{R}^N)^2 \to [0, T] \times \mathbb{R}^{N \times N} $ as
\begin{equation*}
(\bar{\nabla} Y)_{i, j} := 
\begin{cases}
&\frac{Y^{(1)}_i - Y^{(2)}_i}{x^{(1)}_j - x^{(2)}_j} \quad \text{ if } x^{(1)}_j \neq x^{(2)}_j, \\
&0 \quad \quad \text{ if } x^{(1)}_j = x^{(2)}_j.
\end{cases}
\end{equation*} 
for $ i, j \in \{1, ..., N \} $. The operator $ \bar{\nabla} $ on a constant vector is defined similarly.

We next show the FBSDE in $ \bar{\nabla} X $, $ \bar{\nabla} Y $ and $ \bar{\nabla} Z $ has a unique solution that is in fact a constant. This solution leads to a decoupling field associated to the original FBSDE. Then by a pasting argument, the FBSDE has a unique solution on an arbitrary time interval $ [0, T] $. 

By equation \eqref{forward_g} and \eqref{forward_sigma}, and let $ z_1 $, $ z_2 $ denote arbitrary vectors in $ \Rbb^N $, we have
\begin{align}
\frac{g(t, z_1) - g(t, z_2) }{z_1 - z_2} & = b_t, \\
\frac{\sigma(t, z_1) - \sigma(t, z_2) }{z_1 - z_2} & = 1 .
\end{align} 

We can compute 
\begin{align}\label{diffbsdex}
X^{(1)} - X^{(2)} & = x^{(1)} - x^{(2)} + \int_{0}^{t} \frac{g^{(1)} - g^{(2)}}{Z^{(1)} - Z^{(2)}} \br{Z^{(1)} - Z^{(2)}}ds + \int_{0}^{t} \frac{\sigma_1 - \sigma^2}{Z^{(1)} - Z^{(2)}} (Z^{(1)} - Z^{(2)}) d\zeta_s, \\ 
\label{diffbsdey}
Y^{(1)} - Y^{(2)} & = A(X_T^{(1)} - X_T^{(2)}) - \int_{t}^{T} \frac{f^{(1)} - f^{(2)}}{Z^{(1)} - Z^{(2)}} (Z^{(1)} - Z^{(2)}) ds - \int_{t}^{T} (Z^{(1)} - Z^{(2)}) d\zeta_s.
\end{align}

It can be verified that \eqref{diffbsdex} - \eqref{diffbsdey} imply that the processes $ \bar{\nabla} X $, $ \bar{\nabla} Y $ and $ \bar{\nabla} Z $ satisfy the following FBSDE  
\begin{align}
\bar{\nabla} X & = \bar{\nabla} x + \int_{0}^{t} b_s \bar{\nabla} Z_s ds + \int_{0}^{t} \bar{\nabla} Z_s d\zeta_s, \\
\bar{\nabla} Y & = A \bar{\nabla} X_T - \int_{t}^{T} b_s \bar{\nabla} Z_s ds - \int_{t}^{T} \bar{\nabla} Z_s d\zeta_s. 
\end{align}
	
A solution to the above FBSDE is $ (\bar{\nabla} X, \bar{\nabla} Y, \bar{\nabla} Z)_{t \in [0, T]} = (\bar{\nabla} x , A \bar{\nabla} x, 0)_{t \in [0, T]} $.

Similarly, define
\begin{equation}
(\bar{\nabla} u(t))_{i, j} = \frac{u^i(t, X_t^{(1)}) - u^i(t, X_t^{(2)})}{X_t^{(1), j} - X_t^{(2), j}}, 
\end{equation}
we must have
\begin{equation*}
\bar{\nabla} u(t) = \bar{\nabla} Y_t (\bar{\nabla} X_t)^{-1} = A.
\end{equation*}
	
The random field $ u(t, x) $ is uniformly Lipschitz continuous in the spacial variable. Hence it is regular.

Let $ \delta_c > 0 $ be small enough so that the FBSDE (\ref{mainbsde}) admits a unique solution $ \Theta \in {L}^2 $ for $ t \leq \delta_c $. For any $ (t, x) $, denote the (unique) solution to FBSDE (\ref{mainbsde}) starting from $ (t, x) $ by $ \Theta^{t, x} $, and denote a random field by $ u(t, x) = Y_t^{t, x} $. The uniqueness of solution to FBSDE then leads to that $ Y_s^{t, x} = u(s, X_s^{t, x}) $, for $ s \in [t, T] $, $ \mathbb{P} $-a.s.

Let $ 0 = t_0 < ... < t_n = T $ be a partition of $ [0, T] $ such that $ t_i - t_{i - 1}  \leq \delta_c $, $ i - 1, ..., n $. We first consider the FBSDE (\ref{mainbsde}) on $ [t_{n-1}, t_n] $. By existence of solution on a small interval, there exists a process $ Y_t^{t_{n-1}, x} $, for $ x = X_{t_{n-1}} $ and hence a random field $ u(t, x) $ for $ t \in [t_{n-1}, t_n] $ such that $ \bar{\nabla} u(t) = A $ for all $ t \in [t_{n-1}, t_n] $. Next, consider FBSDE (\ref{mainbsde}) on $ [t_{n-2}, t_{n-1}] $ with terminal condition $ u(t_{n-1}, \cdot) $. Apply the results on solution on small interval, we find $ u $ on $ [t_{n-2}, t_{n-1}] $ such that $ \bar{\nabla} u(t) = A $ for $ t \in [t_{n-2}, t_{n-1}] $. Repeating this procedure backward $ n $ times, we extend the random field $ u $ to the whole interval $ [0, T] $.

We now show the solution obtained in this way is in the right space. 
	
Define 
\begin{equation*}
\begin{split}
I_0^2 & \coloneqq \mathbb{E}\bra{\br{ \int_{0}^{T} \br{|g| + |f|}(s, 0) ds}^2 + \int_{0}^{T} |\sigma(s, 0)|^2 ds } \\
& \leq \br{\mathbb{E}  \int_{0}^{T} |b_s|^2 ds}^2 + \mathbb{E}  \int_{0}^{T} |b_s|^2 ds \\
& \leq T^2 \tilde{b}^4 + T \tilde{b}^2 < \infty. 
\end{split}
\end{equation*}

By $ |u(t, x)| \leq |u(t, 0)| + |x| $, considering the FBSDE on each interval $ [t_i, t_{i+1}] $ with initial value $ X_{t_i} = 0 $, we see that there exists a constant $ C $ such that 
\begin{equation}\label{u_bound}
\mathbb{E} |u(t_i, 0)|^2 = \mathbb{E}\abs{Y_{t_i}^{t_i, 0}}^2 \leq C \br{\mathbb{E} |u(t_{i+1}, 0)|^2 + \mathbb{E} \abs{X_{t_{i+1}}}^2} + CI_0^2.
\end{equation}

Since $ u(t_n, 0) = 0 $, we have 
\begin{equation}
\underset{0\leq i \leq n}{\max} \mathbb{E} |u(t_i, 0)|^2 \leq CI_0^2.
\end{equation}
	
A standard estimation using the forward and backward dynamics and the bounds for the coefficients yields,
\begin{align}\label{x_bound}
\begin{split}
& \mathbb{E}\bra{\underset{t_i \leq t \leq t_{i+1}}{\sup} \br{|X_t|^2 + |Y_t|^2} + \int_{t_i}^{t_{i+1}}|Z_s|^2 ds} \\
& \leq C \mathbb{E}\pr{\abs{X_{t_i}}^2 + \abs{u(t_{i+1}, 0)}^2} + CI_0^2
\end{split}
\end{align}
	
To estimate $ |X_{t_i}|^2 $, notice that \eqref{x_bound} and \eqref{u_bound} imply that 
\begin{equation*}
\mathbb{E} |X_{t_{i+1}}|^2 \leq C \mathbb{E}\pr{|X_{t_i}|^2 + |u(t_{i+1}, 0)|^2} + CI_0^2 \leq C\mathbb{E} |X_{t_i}|^2 + CI_0^2.
\end{equation*}
	
Therefore, 
\begin{equation}
\begin{split}
& \mathbb{E}\bra{\underset{0 \leq t \leq T}{\sup} \br{\abs{X_t}^2 + \abs{Y_t}^2} + \int_{0}^{T}\abs{Z_s}^2 ds} \\
& \leq C \br{\abs{x}^2 + I_0^2}.
\end{split}
\end{equation}
\end{proof}

\begin{remark}
The proof relies crucially on the boundedness of $ (b_t)_{t \in [0, T]} $, or equivalently, of the return rate $ (h_t)_{t \in [0, T]} $. The case where $ h_t $ is an unbounded stochastic process is still open and it is left for future research.
\end{remark}

\section{Deep learning algorithms}
In this section, we will give a brief introduction to the deep neural network that will be used in our numerical scheme. 

\subsection{The neural network as function approximators}
Neural networks are compositions of simple functions. They are efficient in approximating the solutions of (stochastic) differential equations. To obtain a good approximator, it usually requires the algorithm to find the best parameters in the function composition, which, in many cases, is convenient by the method of SGD.

We adopt notations from \cite{hure2019some} and consider simple feedforward neural networks (NNs). Denote the dimension of state variable $ x $ by $ d^x $. Fix a input dimension $ d^I = d^x $ if the approximated function is only in the variable $ x $. We may take time $ t $ as an additional input parameter to enable parameter sharing across time steps. In this case, the solution at all time steps is modeled by a single neural network and the function will depend on $ (t, x) $ and $ d^I = d^x + 1 $. We denote the output dimension by $ d^O $, and $ d^O = N $ for $ N $ the dimension of the FBSDE solution. There are total number of $ L + 1 \in \mathbb{N} $, $ L \geq 2 $ layers for each NN, with $ m_l, l \in \{0, ..., L\} $, the number of neurons in each hidden layer. For simplicity, we choose $ m_l = m $ for $ l \in  \{1, ..., L-1\} $.
\par 
More specifically, the fully connected feedforward neural network for the FBSDE solver is a function from $ \mathbb{R}^{d^I} $ to $ \mathbb{R}^{d^O} $  defined by the composition map
\begin{equation*}
x \mapsto P_L \cdot \varphi \cdot  P_{L-1} \cdot ... \cdot \varphi \cdot P_1(x) \coloneqq f(x) \in \mathbb{R}^N.
\end{equation*}
for $ x \in \mathbb{R}^{d^I} $. Here, $ P_l $, $ l \in \{1, ..., L\} $ are affine functions with assigned input and output dimensions. To be specifical on the structure of $ A_l $, 
\begin{equation*}
P_l(x) = w_l x + b_l, \quad l \in \{1, ..., L\}.
\end{equation*} 

The matrix $ w_l $ and vector $ b_l $ is the weight and bias of a hidden layer, respectively. $ \varphi: \mathbb{R} \to \mathbb{R} $ is the activation function, which is a nonlinear function that can be customized. Some standard activation functions are $ \tanh $, Softmax, Sigmoid, and ReLU. For all our experiments, we use $ \text{ReLU}(x) = \max\{0, x\} $ as the activation function for the fully connected networks.

Denote the parameters of the neural network by $ \theta \in \mathbb{R}^{N_\theta} $, which includes all the matrices $ w_l $ and vectors $ b_l $. Let $ N_\theta(m) = \sum_{l=0}^{L-1}m_l(1 + m_{l+1}) = d^I(1 + m) + m(1 + m) (L-1) + m (1 + N) $. 
Denote $ \Theta_m $ the set of possible parameters with $ m $ hidden units. 

The neural network that satisfies the given input and output dimension, the number of layers, and with the nonlinear function $ \varphi $ is in the function space 
\begin{equation*}
\mathcal{N}\mathcal{N}^\varphi_{d^I, d^O, L} = \underset{m \in \mathbb{N}}{\cup} \mathcal{N}\mathcal{N}^\varphi_{d^I, N, L, m} (\Theta_m) = \underset{m \in \mathbb{N}}{\cup} \mathcal{N}\mathcal{N}^\varphi_{d^I, N, L, m} (\mathbb{R}^{N_\theta(m)})
\end{equation*}

By a learnable variable, we mean any variable that is needed to compute the value of the loss function and can be optimized, other than the parameters in the above functional form of neural networks.

\subsubsection{The recurrent neural network}
We will use the recurrent neural network for the network-based estimation step, and we introduce it here. Denote the weights and bias parameters similar as before. Let $ \varphi $ denote the activation function. the details network structure is as follows. For a time series sequence $ x = (x_0, ..., x_t, ..., x_T) $, we compute the hidden state at time $ t $, $ H_t $ inductively by 
\begin{align}
H_t &= \varphi (w_{it} x_t + b_{it} + w_{ht} H_{t-1} + b_{ht}), \\
f_t &= \varphi(w_t H_t + b_t), \quad t \in \{0, ..., T\}.
\end{align}
where the subscript $ it $ denotes the weight and bias for input at time $ t $ and the subscript $ ht $ indicates the weight and bias for the hidden state at time $ t < T $. The parameters $ w_t $ and $ b_t $ denotes the weight and bias of the linear map for the time $ t $ hidden state. The RNN structure we use for Stage I estimation is exactly this one, with $ x_t $ being the stock price at time $ t $.

\subsection{Deep Learning Scheme}
We perform the deep learning scheme on several independent models. Each model corresponds to a particular market setting. 
\textbf{HM} indicates the homogeneous initial belief, \textbf{HT} for the heterogeneous initial belief, \textbf{PI} for partial information, \textbf{FI} for full information. \textbf{L} for the case of linear Gaussian filter. \textbf{NL} for nonlinear filter. \textbf{C} for game with competition, and \textbf{NC} for no competition.

We use the uniform time discretization for interval $ [0, T] $. Let $ 0 = t_0 < ... < t_K = T $ be such that $ \Delta t = t_k - t_{k - 1} $, $ k \in \{1, ..., K \} $. The conditional expectation in the previous section is a function of stock prices and the initial belief that best approximates the conditioned variable in the least square sense. The deep learning scheme is performed in two stages. In Stage I, we approximate the conditional expectation of the mean stock return as a function of the hidden variable, $ h_t = h(A_t) $ on $ \mathcal{F}_t^S $. The hidden state $ H_t $ computed by the feed-forward neural network depends the past stock prices up to time $ t $, as well as the initial prior of the investor's estimate on the market return. Therefore, so does the approximated estimation of the investor, $ \hat{h} $ at time $ t $. We optimize $ E = 3 $ independent networks in parallel and take the average of network outputs to get a single agent's estimation. The variance reduction technique of averaging random outcomes is common in the classical Monte Carlo method. Let $ \mathcal{G}_k $ be the neural network approximation of conditional expectation at time step $ k $. The input variable is the all the asset prices at discrete time steps, including that at time $ 0 $, and the investor's initial prior. Hence, $ d^{I, K} = d (K + 1) + 1 $ where $ d $ is the number of stocks and $ K $ is the maximal time step index. Let $ \mathcal{G} \in {Rnn}^\varphi_{d^{I,K}, N, m=64} $, and $ G_k $ be the $ k $-th output in the sequential order that depends on information up to time $ t_k $. For mini-batch of size $ B $, the loss function for Stage I is 
\begin{equation*}
LossI =  \frac{1}{B} \frac{1}{K+1} \sum_{j = 1}^B \sum_{k = 0}^K \nrm{\mathcal{G}_k(S^{(j)}_{\cdot \wedge t_k }, \hat{A}_0^{(j)}) - h_k^{(j)}(\hat{A}_0^{(j)}) }^2
\end{equation*}
where $ S^{(j)}_{\cdot \wedge t_k } $ denotes the stock prices up to time $ t_k $, and $ h_k^{(j)}(\hat{A}_0^{(j)}) $ indicates the $ j $-th simulated stock return from the investor's subjective probability measure $ \mathbb{P}^i $ mainly caused by different initial beliefs. Suppose $ \mathcal{G}^{*, (e)} $ is the trained model for the $ e $-th independent network, the estimation of the stock return at time $ t_k $ given the stock price path $ S $ and initial estimate $ \hat{A}_0 $ is 
\begin{equation*}
\hat{h}_k = \frac{1}{E} \sum_{e=1}^{E} \mathcal{G}_k^{*, (e)}(S_{\cdot \wedge t_k}, \hat{A}_0).
\end{equation*}

In Stage II, we use neural networks to approximate the solution $ (X_t, Y_t, Z_t), \; t \in [0, T] $ of the FBSDE. Let $ \hat{Y}_0 $ be learnable variables that will be optimized by the SGD. $ \hat{X}_0 = x $. First, define the vector $ \bm{h} $ by $ \br{\bm{h}_k}_i = h_k^{i} $ where $ h_k^{i} $ is the agent $ i $'s estimation of return at time $ t_k $. The network input is $ (\hat{\bm{h}}_k, S_k, t_k) $, which is a vector consisting the value of estimated returns, the stock prices and a time variable. Let $ d^I = 2 d N + 1 \coloneqq d^S $ (S for Solver) for $ d $ the number of stocks as before, $ \mathcal{G}^S \in \mathcal{N} \mathcal{N}_{d^S, N, 3} $, $ \hat{h}_k, S_k \in \mathbb{R}^d $. Compute $ \hat{Y}_{k+1} $ from $ \hat{Y}_k $ by (omitting index $ j $ for $ j \in \{1, ..., B \} $)
\[ \hat{Y}_{k+1} = \hat{Y}_k + \hat{Z}_k \Delta \zeta_k + f(\hat{\bm{h}}_k, \hat{Z}_k) \Delta t, \]
and $ \hat{X}_{k+1} $ from $ \hat{X}_k $ by 
\[ \hat{X}_{k+1} = \hat{X}_k + \br{\hat{Z}_k \frac{\hat{\bm{h}}_k}{\sigma} + \frac{1}{\bm{\alpha}} \abs{\frac{\hat{\bm{h}}_k}{\sigma}}^2} \Delta t + \br{\hat{Z}_k + \frac{1}{\bm{\alpha}} \frac{\hat{\bm{h}}_k}{\sigma} } \Delta \zeta_k  .\]
where $ \hat{Z}_k = \mathcal{G}^S(\hat{\bm{h}}_k, S_k, t_k) $, until $ k = K $. The loss function is 
\begin{equation*}
LossII = \frac{1}{B} \sum_{j=1}^B \nrm{\hat{Y}_K^{(j)} - A \hat{X}_K^{(j)}}^2.
\end{equation*}

We use a single layer recurrent neural network (RNN) with hidden units $ m = 64 $ for Stage I, the estimation step. The choice of network structure utilizes the time series nature of the input variable, and at the same time to reduce computational complexity. Due to the path dependence of the estimation, the estimate $ \hat{h}_k $ at different times require neural network approximators of varying input dimension, if without the RNN. The RNN takes the sequence of stock prices indexed by time as input and outputs a sequence of hidden states indexed by time. Each element of the output sequence depends on the data up to the index time. We then transform each hidden state by a linear map to obtain the estimation of return at the corresponding time. The loss is the mean square error (MSE) of the estimate against the true return. To minimize the effort of hyperparameter tuning that gives no structural changes, we use the default activation function from the pytorch RNN module and let $ \varphi = \tanh $.

For Stage II, we use networks with $ L = 3 $ layers and $ m = 64 $ hidden units for $ \mathcal{G}^S $. We use the Adam optimizer in both stages. The initial rate for Stage I is $ lr = 1e^{-3} $ and we use learning rate decay and half the learning rate every $ ep^{decay} = 400 $ steps. The learning rate for Stage II is $ lr = 3e^{-3} $, which is a standard choice for solving the BSDEs. Learning rates on the same scale produce similar results. We include a comparison of the same deep learning scheme with different learning rates in the appendix. As mentioned in the previous section, we use ReLU as the activation function for the fully connected network in the FBSDE solver.

The training proceeds with $ ep^{train} = 5000 $ epochs for Stage I, followed by $ ep^{train} = 5000 $ epochs for Stage II. Mini-batch size is $ B = 64 $ for both stages. The deep learning scheme is efficient and robust across different sets of hyperparameters. We use deeper networks on Stage I compared to that of Stage II because of the path dependence nature of the estimation problem. Since the network is long, fewer hidden units in each layer are needed to achieve the same complexity of the function approximator. In all the numerical experiments, we fix the investment horizon $ T = 0.5 $ year. The CPU time for Stage II (FBSDE solver) of the parameter set $ (\eta, \bar{\mu}) = (0.02, 0) $, with $ 5000 $ training epochs is $ 468s \approx 8 \text{mins} $ on a MacBook Pro with the 2.2 GHz Quad-Core Intel Core i7 processor.

\section{Numerical results and model implications}\label{numerical_results}
Although the solution to the PDE is in an analytic form, we still need to compute the values of integrals by numerical integrations. In this section, we present the solution to the HJB equation, as well as the numerical solution by solving the multi-dimensional FBSDE (\ref{mainbsde}) using the deep learning method. We compare results from both methods in case of linear filters when PDE solutions are available. We further apply the deep learning scheme on FBSDEs when nonlinear filters are used to obtain the estimate of return rate. When the estimated return $ \hat{h}_t $ is bounded, the solution to the FBSDE is unique. The deep learning solution converges to the unique Nash equilibrium. When the return process is not necessarily bounded, which in our case, can be a CIR process for the hidden variable $ A_t $ and a square root relation between the mean return and the hidden variable, we do not have theoretical results on the uniqueness of the solution. However, we can still apply the numerical method to find an equilibrium strategy. 

\subsection{Linear filter: homogeneous initial belief}
In this section, we assume the investor's initial estimate is accurate, i.e, $ \hat{\Sigma}(0) = 0$. 
Denote $ \Delta W_{t_k} = W_{t_{k+1}} - W_{t_k} $ and $ \Delta B_{t_k} = B_{t_{k+1}} - B_{t_k} $. By dynamics \eqref{ou_hidden} - \eqref{ou_stock}, generate sample paths of stock prices by
\begin{align}
\label{linear_hidden}
{A}_{t_{k+1}} &= A_{t_k} - \lambda (A_{t_k} - \bar{\mu}) \Delta t + \sigma_\mu \Delta B_{t_k}, \\
\label{linear_stock}
\frac{S_{t_{k+1}} - S_{t_k}}{S_{t_k}} &= A_{t_k} \Delta t + \sigma_S \br{\sqrt{1 - \rho^2} \Delta W_{t_k} + \rho \Delta B_{t_k}}, \quad k \in \{0, ..., K - 1\}.
\end{align}

Set the base market parameters for the case of linear filters to be 
\begin{equation}\label{lin_params}
\lambda = 8, \sigma_S = 0.15, \sigma_\mu = 0.3, \rho = -0.8,
\end{equation}
where we allow the initial condition for $ h_0 = h(A_0) $ vary, as well as the $ \bar{\mu} $ vary for different experiments. We specify the cases later when presenting the correponding investment strategies.

The risk preference parameters are shown in Table \ref{tab:risk_params}.
\begin{table}[H]
\centering
\begin{tabular}{ccccccc}
	\toprule
	Case & $ \delta_1 $ & $ \delta_2 $ & $ \delta_3 $ &  $ \theta_1 $ &  $ \theta_2 $ & $ \theta_3 $ \\
	\midrule
	\textbf{NC} & 2 & 3 & 5 & 0 & 0 & 0 \\
	\textbf{C} & 2 & 3 & 5 & 0.2 & 0.5 & 0.2 \\
	\bottomrule
\end{tabular}
\caption{Investors' risk parameters with or without competition. Case C indicates investment under wealth competition, and NC indicates the standard CARA utility case.}
\label{tab:risk_params}
\end{table}

To illustrate the investor's estimated return process, we plot the sample path of estimates from the RNN structure in \figref{fig:h_mean}, together with the true return process. The estimate is the average of three independent RNN network approximates. Consistent with standard results from filtering theory, the estimates exhibits a trend that is similar to to the true process but with a time lag. The values of the estimate process are less extreme due to the effect of estimation.

Table \ref{tab:pde_fbsde_solutions_v_initial_pi} is a comparison of the the deep learning results to the benchmark solutions from the PDE method. Both the initial positions and the value functions are accurate for the experimented parameter sets. The largest relative error of the initial position is $ (5.82 - 5.58) / 5.58 = 4.3\% $. All the absolute errors are less than $ 0.2 $ with one exception smaller than $ 0.25 $. The initial values are accurate to the 2nd significant digit, and the maximal relative error in the value is less than $ 0.5\% $. 

\figref{fig:pixlinhomo} exhibits sample path strategies and the wealth processes of the agents. The portfolios were solved for the same three agents as in the previous experiments in all the subfigures. For a clear layout, we only plot the investment strategy $ \pi_t $ of Agent 1 and Agent 3, and omit Agent 2 in the first row. In the second row, the wealth processes for all three agents are shown. In the first row, the black dash-dot line indicates the Merton strategy under the competition case, that is, the strategy by assuming deterministic return process, which strategy can be solved as in \cite{lacker2019mean}. The difference between the investment strategy and the Merton strategy is the hedging demand of the investor under competition utility. The hedging demand vanishes as time approachs the end of the investment horizon, regardless of the market or risk parameters. 

Table \ref{l_mean_std_cv_ratio} shows the statistics of investment strategies for all three investors under different market parameters and risk preferences. Each mean and standard derivation the empirical statistics of $ B = 64 $ sample paths. The point here is not to estimate the true time series mean and std using Monte Carlo method, but to illustrate the distribution of strategies, hence a sample size the same as the training mini-batch size is used. We compute the CV (coefficient of variation) as the ratio of the std and the mean, or the std per unit of the mean, as an additional indicator for the time series volatility of strategies, and we report the average of CVs for the three investors in the table. Investors' strategies are more volatile under full information by observing the std and the CV. However, the CV for the first set of market parameter indicates that the variation per unit of the mean may increase under the partial information setting. Competition does not have a significant effect on the CVs for the three test market parameters, indicating that standard deviation increase mostly due to the increase of the strategy in term of absolute value.

Table \ref{tab:lin_sharpe_vrr} reports the empirical Sharpe ratio and the VRR, which is defined as the mean return over variance, instead of over the std, following \cite{espinosa2015optimal}. We also report the Sharpe ratio and the VRR by viewing the three agents portfolio as the social portfolio. In the current experiments, the mean reverting ($ \bar{\mu} = 0, 0.02 $) level is quite low, since we take a conservative view of the market, which causes both the Sharpe and VRR to be small. For the first set of parameters (the top panel), small variation in the portfolio returns compensates for the inaccurate of return estimates, resulting in the larger Sharpe and VRR in the case PI compared to the case FI , for all three agents. Comparing the portfolio performances for the C and NC case under partial information, whether competition increases or decreases the Sharpe and VRR depends on the interaction of market and investors' parameters. We leave it as future research to find the condition of the parameters and risk preferences that induces each case.



\begin{figure}[tb]
	\centering
	{
	{{\includegraphics[width=0.8\textwidth]{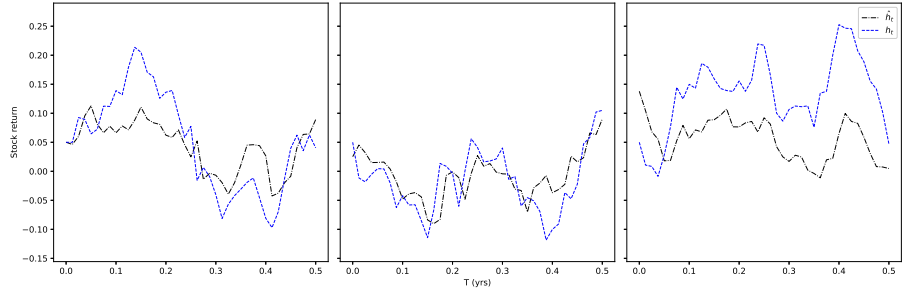}}
	}
	}
	\caption{Sample paths of the estimated return versus the real market return for the linear Gaussian return dynamics with process parameters specified in \eqref{lin_params}. Black dashed line indicates the average of $ 3 $ independent neural network approximations. $ h_0 = 0.05 $, $ \bar{\mu} = 0.02 $, and the blue dashed line indicates the true return. The initial estimate is a constant that equals to the true raturn rate, i.e, $ \hat{h}_0 = 0.05 $. $ x $-axis indicates time.  $ y $-axis is the stock return rate.}
	\label{fig:h_mean}
\end{figure}

\begin{figure}[tb]
	\centering
	{
		{{\includegraphics[width=0.8\textwidth]{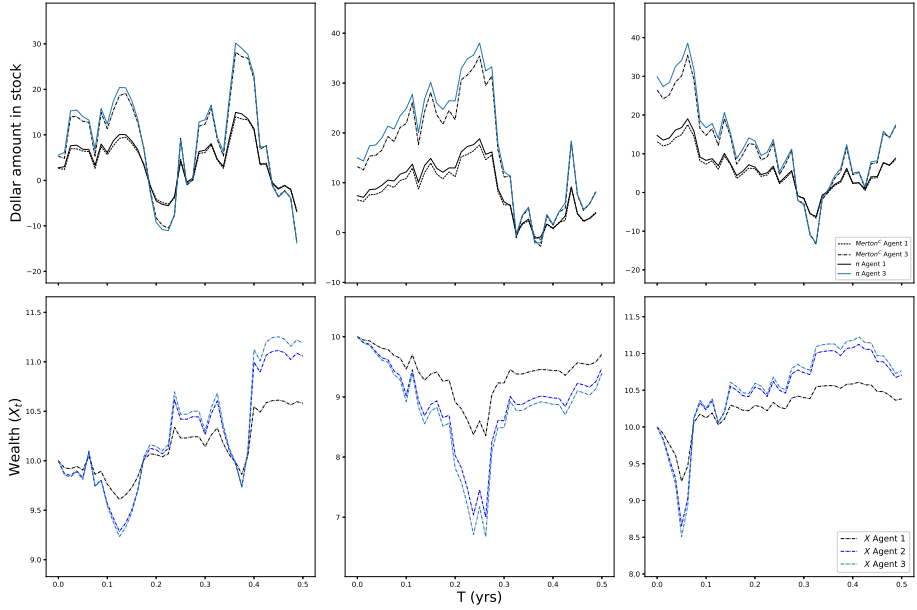}}
		}
	}
	\caption{Sample paths of the dollar amount invested in the stock for Agent $ 1 $ and Agent $ 3 $, as well as the wealth processes $ X_t $ for all three agents. The left column corresponds to investors' response when the market parameters are $ h_0 = 0.02 $, $ \bar{\mu} = 0 $. The middle column corresponds to the investors' response when the market has $ h_0 = 0.05 $, $ \bar{\mu} = 0.02 $. And the right column corresponds to the strategies and wealth processes for investors' when the true market parameters are $ h_0 = 0.1 $ with $ \bar{\mu} = 0.02 $.}
	\label{fig:pixlinhomo}
\end{figure}

\begin{table}[t]
	\resizebox{0.8\textwidth}{!}
	{\begin{tabular}{c c c c c}
			\toprule
			$ \eta $ & $ \bar{\mu} $ & Equation & Initial position & $ V(0) $  \\
			\midrule
			$ 0.02 $ & $ 0 $ & PDE  & $ (2.8736, 5.2312, 5.8239) $  & $ (-0.01773, -0.1829, -0.1955) $ \\
			& & FBSDE & $ (2.7492, 5.1297, 5.5849) $ & $ (-0.01779, -0.1831, -0.1957) $ \\
			\hline 
			$ 0.05 $ & $ 0.02 $ & PDE  &  $ (7.4893, 13.6344, 15.1836) $ & $ (-0.01760, -0.1815, -0.1937) $ \\
			& & FBSDE & $ (7.3917, 13.4794, 14.9981) $ & $ (-0.01759, -0.1814, -0.1939) $ \\
			\hline
			$ 0.1 $ & $ 0.02 $ & PDE  &  $ (14.6734, 26.7123, 29.7434) $ & $ (-0.01740, -0.1796, -0.1912)$ \\
			& & FBSDE & $  (14.7277, 26.9211, 29.8841) $ & $ (-0.01730, -0.1783, -0.1907) $ \\
			\bottomrule
	\end{tabular}}
	\caption{The investors' initial positions and values of investment obtained from solving the PDEs and the FBSDEs. The top, middle and bottom panels shows solution under $ 3 $ different market environment with different initial return and mean-reverting level. The investors' initial prior is a constant equal to the market true return rate.}
	\label{tab:pde_fbsde_solutions_v_initial_pi}
\end{table}

\begin{table}[t]
	\resizebox{0.95\textwidth}{!}
	{\begin{tabular}{@{}llllllllll@{}}
			\toprule
			 & \multicolumn{3}{c}{Mean of (abs. value) strategies} & \multicolumn{3}{c}{Std of (abs. value) strategies} & CV & \multicolumn{2}{l}{Ratio of CV} \\
			 \midrule
			& \multicolumn{1}{c}{\textbf{Agent 1}} & \multicolumn{1}{c}{\textbf{Agent 2}} & \multicolumn{1}{c}{\textbf{Agent 3}} & \multicolumn{1}{c}{\textbf{Agent 1}} & \multicolumn{1}{c}{\textbf{Agent 2}} & \multicolumn{1}{c}{\textbf{Agent 3}} & Mean & \multicolumn{1}{c}{\textbf{PI / FI}} & \multicolumn{1}{c}{\textbf{C / NC}} \\ 
			\midrule
			\textbf{NC-FI} & 5.360 & 8.044 & 13.402 & 3.507 & 5.264 & 8.776 & 0.654 &  &  \\
			\textbf{NC-PI} & 3.562 & 5.345 & 8.908 & 2.422 & 3.630 & 6.056 & 0.680 & 1.038 &  \\
			\textbf{C-PI} & 4.994 & 9.103 & 10.077 & 3.305 & 6.022 & 6.674 & 0.662 &  & 0.974 \\
			\midrule
			\textbf{NC-FI} & 5.763 & 8.647 & 14.397 & 3.726 & 5.590 & 9.311 & 0.647 &  &  \\
			\textbf{NC-PI} & 4.315 & 6.472 & 10.786 & 2.516 & 3.777 & 6.291 & 0.583 & 0.902 &  \\
			\textbf{C-PI} & 6.436 & 11.723 & 12.983 & 3.720 & 6.775 & 7.497 & 0.578 &  & 0.991 \\
			\midrule
			\textbf{NC-FI} & 6.562 & 9.835 & 16.394 & 4.070 & 6.103 & 10.175 & 0.620 &  &  \\
			\textbf{NC-PI} & 5.475 & 8.211 & 13.687 & 3.023 & 4.538 & 7.565 & 0.553 & 0.890 &  \\
			\textbf{C-PI} & 7.707 & 14.037 & 15.542 & 4.446 & 8.096 & 8.966 & 0.577 &  & 1.044 \\ \bottomrule
	\end{tabular}}
	\caption{Time series mean and standard deviation of the three agents's absolute value of investment strategies under different market parameter sets for the linear Gaussian case. The return dynamic is given by (\eqref{linear_stock} - \eqref{linear_hidden}) and parameters are given by \eqref{lin_params}. The CV (coefficient of variation) is the time series standard deviation per unit of the mean, i.e, CV = Std / mean. The ratio PI / FI is the ratio of the CVs of the case PI-NC and the FI-NC. The ratio C / NC is the ratio of CVs of the case PI-C and PI-NC. The top, middle and bottom panel corresponds to the case $ (h_0, \bar{\mu}) $ equals to $ (0.02, 0) $, $ (0.05, 0.02) $ and $ (0.1, 0.02) $, respectively. The investor's initial belief is equal to the true return for all cases.}
	\label{l_mean_std_cv_ratio}
\end{table}

\begin{table}[t]
	\resizebox{0.9\textwidth}{!}
	{\begin{tabular}{clllllllll}
		\toprule
		& \multicolumn{2}{l}{Sharpe ratio} &  &  & VRR &  &  &  &  \\
		\midrule
		& \textbf{Agent 1} & \textbf{Agent 2} & \textbf{Agent 3} & \textbf{Social} & \textbf{Agent 1} & \textbf{Agent 2} & \textbf{Agent 3} & \textbf{Social} \\
		\midrule
		\textbf{NC-FI} & 0.03690 & 0.03695 & 0.03677 & 0.008369 & 0.06130 & 0.04091 & 0.02441 & 0.002779 &  \\
		\textbf{NC-PI} & 0.04776 & 0.04802 & 0.04807 & 0.01357 & 0.1267 & 0.08520 & 0.05124 & 0.007223 &  \\
		\textbf{C-PI} & 0.04645 & 0.04654 & 0.04651 & 0.01182 & 0.08331 & 0.04583 & 0.04138 & 0.004382 &  \\
		\hline 
		\textbf{NC-FI} & 0.06655 & 0.06617 & 0.06613 & 0.01889 & 0.1448 & 0.09614 & 0.05757 & 0.008226 &  \\
		\textbf{NC-PI} & 0.05290 & 0.05285 & 0.05280 & 0.01523 & 0.1497 & 0.09974 & 0.05987 & 0.008628 &  \\
		\textbf{C-PI} & 0.02738 & 0.02743 & 0.02705 & 0.002783 & 0.03355 & 0.01846 & 0.01639 & 0.000704 &  \\
		\hline 
		\textbf{NC-FI} & 0.06192 & 0.06171 & 0.06189 & 0.01639 & 0.10203 & 0.06771 & 0.04078 & 0.005400 &  \\
		\textbf{NC-PI} & 0.02417 & 0.02406 & 0.02418 & 0.003781 & 0.04106 & 0.02723 & 0.01642 & 0.001284 &  \\
		\textbf{C-PI} & 0.03884 & 0.03888 & 0.03904 & 0.007006 & 0.04524 & 0.02485 & 0.02252 & 0.001685 &  \\
		\bottomrule
	\end{tabular}}
\caption{The empirical Sharpe ratio and VRR for each agent as well as the total wealth of agents (social wealth). The top panel (row 1 - 3) corresponds to the market parameters with initial return rate $ 0.02 $ and mean reverting to $ 0 $. The middle panel (row 4 - 6) corresponds to initial return $ 0.05 $ and mean reverting level $ 0.02 $. The bottom panel (row 7 - 9) is for different information setting when the initial market return is $ 0.1 $, with mean-reverting to $ 0.02 $. In each market setting, the investors have the initial belief that is a constant equals to the true market return rate.}
\label{tab:lin_sharpe_vrr}
\end{table}

\subsection{Linear filter: heterogenous prior beliefs}
Recall the estimated return of Agent $ i $ is given by $ \mathbb{E}^i\pr{h(t) | \mathcal{F}^S \vee \hat{A}_0^i } $, where $ \mathbb{E}^i $ indicates expectation under the subjective probability measure $ \mathbb{P}^i $ of Agent $ i $. In both deep learning stages, we need to generate sample paths of $ h_t $ under those probability measures. The initial beliefs in the hidden variable $ \hat{A}_0 $ are sampled from a normal distribution of $ N(m^i, v^i) $, $ i \in \{1, ..., N\} $. Notice that $ v^i = \hat{\Sigma}^i(0) $. 

To focus on the variation in the estimates' accuracy, we assume the means of estimates are the same for all 3 agents and let the standard deviation vary. The mean is equal to the true return rate. More specifically, the parameters for initial estimates are
\begin{equation}\label{hetero_parameter}
(m^1, std^1) = (0.05, 0.05) ,  (m^2, std^2) = (0.05, 0.1), (m^3, std^3) = (0.05, 0),
\end{equation}
where the Agent $ 3 $ has the accurate estimate.

To generate sample paths under the subjective probability measure, we first sample $ \hat{A}^i $ from the prescribed distribution, then proceed as follows:
\begin{align}
\label{A_discrete}
\hat{A}^i_{t_{k+1}} = \hat{A}^i_{t_k} - \lambda (\hat{A}^i_{t_k} - \bar{\mu}) \Delta t + \sigma_a \Delta B_{t_k}, \quad k \in \{0, ..., K - 1\}.
\end{align}

The above equation requires simulation of $ \Delta B $. To obtain the stock prices in $ \mathbb{P} $, we next simulate $ \Delta W $. Let $ A_0 $ be the accurate market return, we simulate according to \eqref{linear_hidden} and \eqref{linear_stock} to get the stock prices $ S_{t_k} $ in the objective world. The estimation is the projection on the subjection view $ \hat{A}^i $.

The hidden state of the recurrent network $ \mathcal{G}^S $ at time $ t_k $ depends on the stock path up to time $ t_k $, as well as $ \hat{A}_0 $. To get the return estimates for investor $ i $, we then optimize the NNs by SGD on the mean square loss of the NN outputs against the subjective hidden state $ \hat{A}^i $. The estimated return is the output of Stage I. The estimation is a part of the neural network input at Stage II for solving the FBSDE. 

To facilitate the comparison, the numerical results for this section in both the case with and without competition, C and NC, respectively, are shown in the later section together with the heterogeneous agents case with nonlinear returns. 



\subsection{Nonlinear filter}
The linear relation between the return rate and stock fundamentals allows us to obtain an explicit solution. However, the assumption is restrictive and unrealistic. To find an investment strategy that is useful in practice, or to derive relevant asset pricing implications from the investment strategies, we need to consider the case of nonlinear $ h $. For the numerical experiments, we focus on the following stock and hidden state dynamics:
\begin{align}\label{nl_stock}
\frac{dS_t}{S_t} &= c \cdot \text{sign}(A_t) \sqrt{|A_t|} dt + \sqrt{1 - \rho^2} \sigma_S dW_t + \rho \sigma_S dB_t \quad (\text{observed}), \\
\label{nl_hidden}
dA_t &= -\lambda (A_t - \bar{\mu})dt + \sigma_a \sqrt{(A_t - a_l)(a_u - A_t)}dB_t \quad (\text{hidden})
\end{align} 
for $ \lambda, c, a_l,  a_u \in \Rbb $.

The numerical scheme for solving the equilibrium investment strategy and value functions is similar to the one with linear filters. The difference is that simulations of $ \hat{A}^i_t $ and $ A_t $ are according to the discretized equation for the above dynamics instead of the Ornstein-Uhlenbeck process \eqref{linear_stock} - \eqref{linear_hidden}.

We next present numerical results in cases of nonlinear filters for heterogeneous market investors with parameters in \eqref{hetero_parameter}. The base market parameters for \eqref{nl_hidden} - \eqref{nl_stock} are
\begin{equation}\label{nl_params}
c = 0.25, \rho = -0.8, \sigma_S = 0.15, \lambda = 1, \sigma_a = 0.4, a_l = -0.3,  a_u = 0.3.
\end{equation}

Similar to the case of linear Gaussian return, \figref{fig:pixnlhomo} shows sample path strategies and the wealth processes of the agents. The black dash-dotted line in the first row of the figure is the Merton strategy that assumes the return is a deterministic process, which we use as the benchmark. And the difference between the real strategies and the benchmark is the hedging demand for stochastic returns. As time approaches the end of the investment horizon, the hedging demand vanishes. 

\figref{fig:picncnl} illustrates the effect of competition. The subfigures are the investment strategies for all the three agents. Each subfigure includes the strategy for an investor in case of with and without relative concerns, C and NC. The horizontal line indicates the time series mean of strategies in the C and NC case. In case with competition, the weight parameters are 0.2, 0.5 and 0.2, respectively for the three agents, and it is apparent from the figure that the change in Agent 2's strategy is the largest among the three investors, due to the largest competition weight factor.

Table \ref{tab:nl_mean_std_cv_ratio} shows the time series statistics of investment strategies for all investors with different market parameters and risk preferences under the nonlinear return dynamics. Both the mean and Std are mean of the $ B = 64 $ sample path Means and Stds. The CV is defined similar as before as the standard deviation per unit of the mean, and we report the mean of CVs for the three investors in the table. We then calculate the changes in the CVs and report the ratio as an indicator to the volatilities of the strategies. Under the nonlinear dynamics, the CVs for the first parameter set is significantly smaller in the partial information case (PI), compared to the full information case (FI). For other initial return rates and mean reverting levels, it is similar that the strategies under PI is less volatile. Unlike the experimented cases of the linear filter, competition can decrease the volatility of the strategies, since the ratio of CVs for C and NC for the last set of parameters, $ (h_0, \bar{\mu}) = (0.1, 0.02) $ is smaller than $ 1 $. 

Table \ref{tab:nl_sharpe_vrr} reports the empirical Sharpe ratio and VRR, which is defined as the mean return over variance. The Sharpe ratio is higher in the full information case for the first two sets of market parameters, while for the last parameter set, the partial information case yields a higher Sharpe ratio. Similarly for the VRRs. The standard deviation of wealth processes may be large in the case of full information to the level that a high return could not compensate for, and thus induces a smaller empirical Sharpe ratio in the FI case. Whether or not competition increases the return of portfolio per unit of risk also may depend on the specification of risk preference and market parameters.

\begin{figure}[t]
	\centering
	\includegraphics[width=0.8\linewidth]{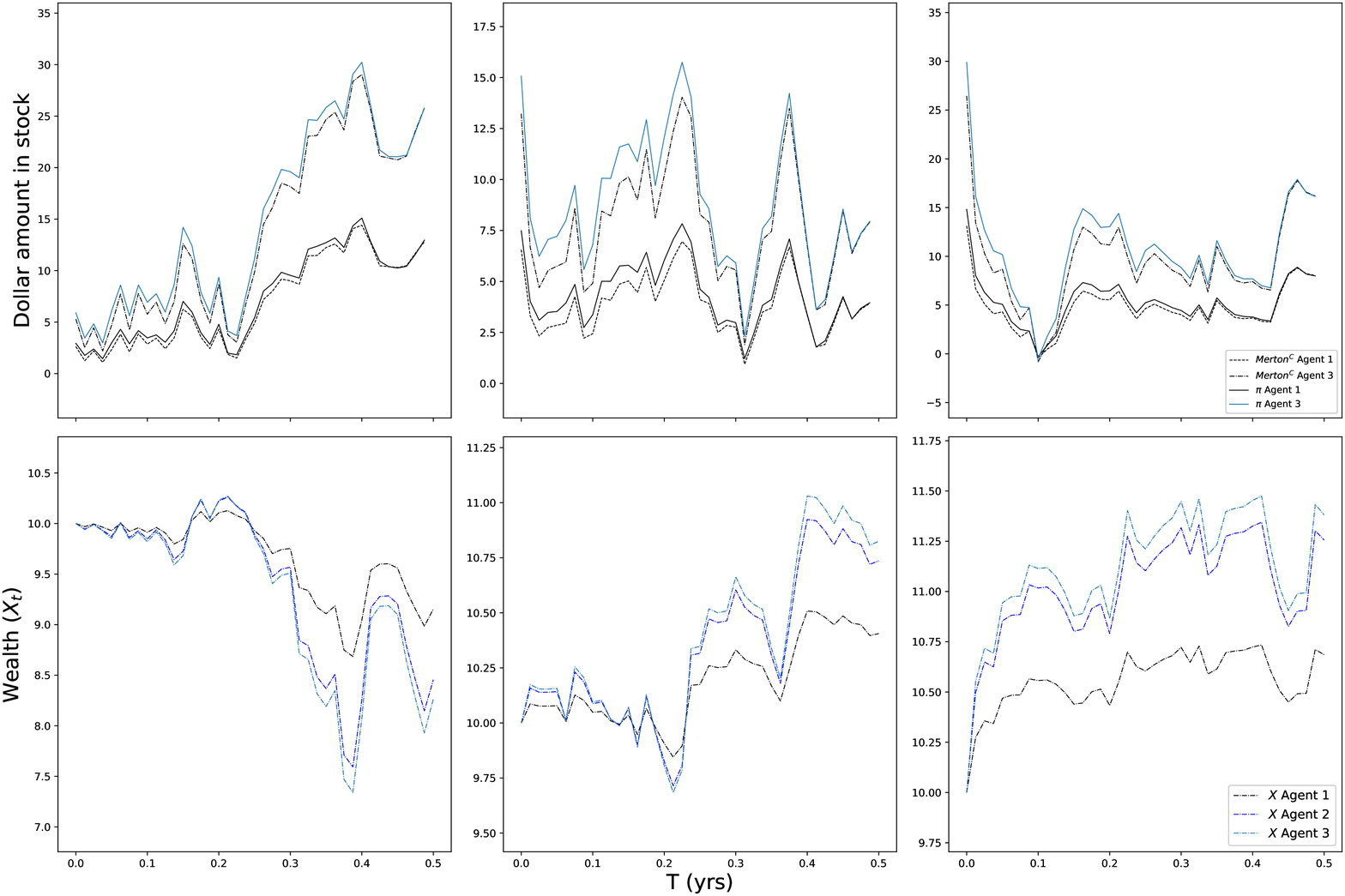}
	\caption{Sample paths of the dollar amount invested in the stock for Agent $ 1 $ and Agent $ 3 $, as well as the wealth processes $ X_t $ for all three agents when the return rate dynamics is nonlinear. The left column corresponds to investors' response when the market parameters are $ h_0 = 0.02 $, $ \bar{\mu} = 0 $. The middle column corresponds to the investors' response when the market has $ h_0 = 0.05 $, $ \bar{\mu} = 0.02 $. And the right column corresponds to the strategies and wealth processes for investors' when the true market parameters are $ h_0 = 0.1 $ with $ \bar{\mu} = 0.02 $.}
	\label{fig:pixnlhomo}
\end{figure}

\begin{figure}[t]
	\centering
	\includegraphics[width=0.8\textwidth]{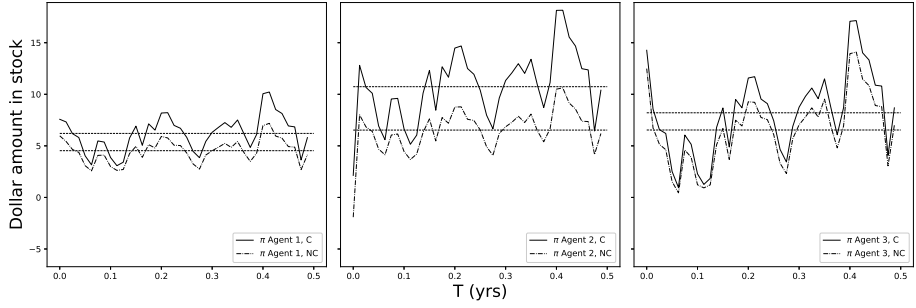}
	\caption{Sample paths of investors' strategies under nonlinear return rate dynamics. From the left to the right are the strategies for Agent $ 1 $, Agent $ 2 $ and Agent $ 3 $, respectively. The solid lines are strategies when investors are under competition. The dash-dotted lines are the Merton strategies. The risk aversion parameters and competition weights for both cases are specified in \ref{tab:risk_params}. The horizontal lines are the averages of strategies across time for the plotted sample paths.}
	\label{fig:picncnl}
\end{figure}

\begin{table}[t]
	\resizebox{0.95\textwidth}{!}
	{\begin{tabular}{@{}llllllllll@{}}
			\toprule
			 & \multicolumn{3}{c}{Mean of (abs. value) strategies} & \multicolumn{3}{c}{Std of (abs. value) strategies} & CV & \multicolumn{2}{l}{Ratio of CV} \\
			 \midrule 
			& \multicolumn{1}{c}{\textbf{Agent 1}} & \multicolumn{1}{c}{\textbf{Agent 2}} & \multicolumn{1}{c}{\textbf{Agent 3}} & \multicolumn{1}{c}{\textbf{Agent 1}} & \multicolumn{1}{c}{\textbf{Agent 2}} & \multicolumn{1}{c}{\textbf{Agent 3}} & Mean & \multicolumn{1}{c}{\textbf{PI / FI}} & \multicolumn{1}{c}{\textbf{C / NC}} \\
			\textbf{NC-FI} & 5.386 & 8.078 & 13.461 & 3.671 & 5.494 & 9.163 & 0.681 &  &  \\
			\textbf{NC-PI} & 3.024 & 4.540 & 7.566 & 1.625 & 2.438 & 4.063 & 0.537 & 0.789 &  \\
			\textbf{C-PI} & 4.092 & 7.454 & 8.249 & 2.187 & 3.989 & 4.415 & 0.535 &  & 0.996 \\
			\midrule
			\textbf{NC-FI} & 6.064 & 9.099 & 15.168 & 3.740 & 5.612 & 9.359 & 0.617 &  &  \\
			\textbf{NC-PI} & 3.735 & 5.606 & 9.333 & 1.556 & 2.331 & 3.881 & 0.416 & 0.675 &  \\
			\textbf{C-PI} & 5.507 & 10.039 & 11.098 & 2.397 & 4.367 & 4.833 & 0.435 &  & 1.046 \\
			\midrule
			\textbf{NC-FI} & 6.865 & 10.293 & 17.146 & 4.022 & 6.027 & 10.044 & 0.586 &  &  \\
			\textbf{NC-PI} & 5.100 & 7.645 & 12.741 & 1.845 & 2.773 & 4.615 & 0.362 & 0.618 &  \\
			\textbf{C-PI} & 7.322 & 13.347 & 14.763 & 2.602 & 4.740 & 5.256 & 0.356 &  & 0.982 \\ \bottomrule
	\end{tabular}}
	\caption{Time series statistics of the three agents' absolute value of investment strategies under different market parameter sets for the nonlinear case. The return dynamic is (\eqref{nl_stock} - \eqref{nl_hidden}) and the parameter set is in \eqref{nl_params}. The CV (coefficient of variation) is the time series standard deviation per unit of the mean, i.e, CV = Std / mean. Mean of CVs is the average of investors' CVs. The ratio PI / FI is the ratio of the (mean of) CVs of the case PI-NC and the FI-NC. The ratio C / NC is the ratio of the (mean of) CVs of the case PI-C and PI-NC. The top, middle and bottom panels correspond to the case $ (h_0, \bar{\mu}) $ equals to $ (0.02, 0) $, $ (0.05, 0.02) $ and $ (0.1, 0.02) $, respectively. The investor's initial belief is equal to the true return for all cases. Agents' risk parameters are the base parameters.}
	\label{tab:nl_mean_std_cv_ratio}
\end{table}

\begin{table}[t]
	\label{nl_sharpe_vrr}
	\resizebox{0.9\textwidth}{!}
	{\begin{tabular}{clllllllll}
	\toprule
	& \multicolumn{2}{l}{Sharpe ratio} &  &  & VRR &  &  &  &  \\
		\midrule
& \textbf{Agent 0} & \textbf{Agent 1} & \textbf{Agent 2} & \textbf{Social} & \textbf{Agent 0} & \textbf{Agent 1} & \textbf{Agent 2} & \textbf{Social} \\
\midrule
		\textbf{NC-FI} & 0.05613 & 0.05635 & 0.05642 & 0.01467 & 0.09359 & 0.06289 & 0.03782 & 0.004910 &  \\
		\textbf{NC-PI} & 0.009739 & 0.009683 & 0.009705 & 0.001139 & 0.02871 & 0.01899 & 0.01145 & 0.0006710 &  \\
		\textbf{C-PI} & 0.02875 & 0.02900 & 0.02873 & 0.006912 & 0.06703 & 0.03725 & 0.03315 & 0.003329 &  \\
		\hline 
		\textbf{NC-FI} & 0.05903 & 0.05926 & 0.05933 & 0.01523 & 0.08997 & 0.06037 & 0.03626 & 0.004654 &  \\
		\textbf{NC-PI} & 0.03445 & 0.03445 & 0.03440 & 0.009266 & 0.09866 & 0.06578 & 0.03945 & 0.005312 &  \\
		\textbf{C-PI} & 0.01656 & 0.01658 & 0.01647 & 0.00113 & 0.02605 & 0.01432 & 0.01286 & 0.0003680 &  \\
		\hline
		\textbf{NC-FI} & 0.01569 & 0.01553 & 0.01557 & -0.003970 & 0.01729 & 0.01139 & 0.006854 & -0.0008700 &  \\
		\textbf{NC-PI} & 0.03577 & 0.03571 & 0.03573 & 0.008800 & 0.07407 & 0.04918 & 0.02958 & 0.003641 &  \\
		\textbf{C-PI} & 0.07469 & 0.07465 & 0.07486 & 0.02095 & 0.1460 & 0.08064 & 0.07303 & 0.008520 &  \\
		\bottomrule
	\end{tabular}}
	\caption{The empirical Sharpe ratio and VRR for each agent as well as the social wealth for nonlinear return rate dynamics. The top panel (row 1 - 3) corresponds to the market parameters with initial return rate $ 0.02 $ and mean reverting to $ 0 $. The middle panel (row 4 - 6) corresponds to initial return $ 0.05 $ and mean reverting level $ 0.02 $. The bottom panel (row 7 - 9) is for different information setting when the initial market return is $ 0.1 $, with mean-reverting to $ 0.02 $. In each market setting, the investors have the initial belief that is a constant equals to the true market return rate. Agents' risk parameters are the base parameters.}
	\label{tab:nl_sharpe_vrr}
\end{table}

\subsection{Heterogenous prior beliefs: the nonlinear filter and comparisons}
We have described the computation algorithm of the estimation step, Stage I in previous sections. The estimations enter the FBSDE as state variables, and allow us to solve for the optimal investment strategies under both the case with and without competition. Table \ref{tab:hetero_pi_l_nl} shows the times series mean and standard derivations of the absolute value of investment strategies of the 3 agents. Similar as in the previous sections, we report the mean of the time series statistics over a sample of size $ B = 64 $. The bottom panel of \ref{tab:hetero_pi_l_nl} is the differences of the statistics between the linear and nonlinear case for the set of market parameters that we specified. 


The top panel shows that competition increases the investment proportion and its volatility, since the ratio of the mean of the competition and no competition case is greater than 1 for all agents both in the homogeneous and heterogeneous case. Similarly for the case with nonlinear filters as it is shown in the middle panel. A key observation is the ratios are larger in the HT case compared to the HM case for both linear and nonlinear return dynamics, which indicates that investors increase their investment proportion in absolute value. In other words, Agent 1 and Agent 2 follow the strategy of Agent 3, who has the most aggressive strategy. Agent 3 also increases the investment proportion due to the aggregate effect of strategies of the other two and the competition effect. Since Agent 3 has the most accurate prior, this further illustrates that the investor with the most accurate information leads the investment strategy when market agents interact through the relative wealth concern.

The bottom panel is the difference between the top and the middle panel, the case L minus the case NL. The ratio is higher in the linear filter case, except for Agent 3 with heterogeneous investors, where the competition effect is more pronounced for Agent 3 in the nonlinear case compared to the linear case. Viewing the competition effect as an agent-market characteristics, the cross-market difference varies more across agents in the HT case (the column C / NC in the bottom panel). Therefore, partial information heterogeneity affects the sensitivity of the competition effect to market characteristics, the return process in this particular case. 

Notice that the information heterogeneity is only on the prior estimates. All agents follow the Bayesian learning procedure with unlimited information processing ability. The effect of heterogeneity is already pronounced.

\begin{table}[t]
\resizebox{0.9\textwidth}{!}
{\begin{tabular}{@{}lclllllllll@{}}
		\toprule
		& & \multicolumn{3}{l}{Mean of (abs. value) strategies} & \multicolumn{3}{l}{Std of (abs. value) strategies} & \multicolumn{1}{c}{\textbf{C / NC}} &  &  \\
		\hline 
		& \multicolumn{1}{l}{} & \multicolumn{1}{c}{\textbf{Agent 1}} & \multicolumn{1}{c}{\textbf{Agent 2}} & \multicolumn{1}{c}{\textbf{Agent 3}} & \multicolumn{1}{c}{\textbf{Agent 1}} & \multicolumn{1}{c}{\textbf{Agent 2}} & \multicolumn{1}{c}{\textbf{Agent 3}} & \multicolumn{1}{c}{\textbf{Agent 1}} & \multicolumn{1}{c}{\textbf{Agent 2}} & \multicolumn{1}{c}{\textbf{Agent 3}} \\
		\midrule
		L-HT & \textbf{NC-PI} & 4.262 & 6.689 & 10.581 & 3.661 & 6.191 & 8.953 &  &  &  \\
		& \textbf{C-PI} & 6.654 & 12.399 & 13.350 & 5.148 & 10.000 & 10.131 & 1.561 & 1.854 & 1.262 \\
		L-HM & \textbf{NC-PI} & 4.315 & 6.472 & 10.786 & 2.516 & 3.777 & 6.291 &  &  &  \\
		& \textbf{C-PI} & 6.436 & 11.723 & 12.983 & 3.720 & 6.775 & 7.497 & 1.492 & 1.811 & 1.204 \\
		\hline 
		NL-HT & \textbf{NC-PI} & 4.998 & 7.576 & 8.920 & 2.095 & 3.438 & 4.915 &  &  &  \\
		& \textbf{C-PI} & 7.402 & 13.426 & 12.043 & 2.992 & 6.084 & 5.991 & 1.481 & 1.772 & 1.350 \\
		NL-HM & \textbf{NC-PI} & 3.735 & 5.606 & 9.333 & 1.556 & 2.331 & 3.881 &  &  &  \\
		& \textbf{C-PI} & 5.507 & 10.039 & 11.098 & 2.397 & 4.367 & 4.833 & 1.474 & 1.791 & 1.189 \\
		\hline 
		Diff: HT & \textbf{NC-PI} & -0.735 & -0.888 & 1.661 & 1.566 & 2.753 & 4.038 &  &  &  \\
		& \textbf{C-PI} & -0.748 & -1.027 & 1.308 & 2.156 & 3.916 & 4.140 & 0.080 & 0.082 & -0.088 \\
		Diff: HM & \textbf{NC-PI} & 0.580 & 0.866 & 1.453 & 0.960 & 1.445 & 2.410 &  &  &  \\
		& \textbf{C-PI} & 0.929 & 1.684 & 1.885 & 1.323 & 2.407 & 2.664 & 0.017 & 0.020 & 0.015 \\
		\bottomrule
\end{tabular}}
\caption{The mean and standard deviation of absolute value of investment strategies for investors with hetegogeneous initial beliefs, under both the linear and nonlinear return dynamics. L stands for linear filter, NL for nonlinear filter, HM and HT stands for homogeneous and nonhomogeneous agents, respectively. In each section of case L and NL, the first two rows show strategies for the heterogeneous beliefs, while we include the result from the homogeneous investors case for comparison (the last two rows in each section). The base market parameter is $ (h_0, \bar{\mu}) = (0.05, 0.02) $. The heterogeneous investors have the initial estimation of $ \hat{h}_0 = h_0 $, while differ in the variance parameters. For the linear case, the variances are $ 0.05 $, $ 0.1 $ and $ 0 $, respectively for the 3 agents. For the nonlinear case, the initial belief is drawn from a uniform distribution on a bounded interval that is with the length $ 0.05 $, $ 0.1 $ and $ 0 $, respectively and centered at the mean $ h_0 $. The bottom panel is the difference of the Means and Stds of the L and NL cases from the top and middle panel.}
\label{tab:hetero_pi_l_nl}
\end{table}




\clearpage
\pagebreak

\section{Conclusion and further remarks}
In this paper, we consider an N-agent game where the investors are utility maximizers. The investor's utility function depends on the wealth amount she outperforms the market average. Market investors can only observe the stock prices, but not the state that drives the drift. First, we establish a fully-coupled forward backward stochastic differential equation (FBSDE) that characterizes the N-agent investment decisions. For bounded return process, we show that the FBSDE solution is unique. Therefore, for the linear Gaussian or bounded nonlinear returns , we have the existence and uniqueness result of the FBSDE solution. Hence, there is a unique Nash equilibrium for the game. The wellposedness of the FBSDE in the case of unbounded return process is not readily available, because it requires higher moments estimation of the solution components to meet the coupling condition. We leave it for future reseach. For the numerical scheme, we apply a novel deep learning approach to the system of equations. We first apply deep-neural-network-based $ L^2 $ projection to obtain each investor's estimation of the asset return, and then design a deep FBSDE solver to find the value functions and the optimal controls simultaneously for all agents. The deep learning solution is compared to the PDE solution for the linear Gaussian return. The methodology developed in this paper, both the theoretical results and the numerical methods have potential applications in stochastic controls, stochastic games as well as in the mean field setting. Moreover, in the present paper, the information heterogeneity is only on the prior estimates. All agents follow the Bayesian learning procedure with uncounstrained information processing ability. The effect of heterogeneity is already pronounced. The case with agents' heterogeneity in information capacity and the corresponding asset pricing implications are promising reseach directions that will potentially lead to fruitful insights. 

\vspace{0.8em}

\section*{Acknowledgements}
We are grateful for the financial support from the NSF of China (Nos.11801099 and 11871364). Chao Deng also appreciates the financial support from the Ministry of education of Humanities and Social Science project of China (No. 18YJC910005), and the Natural Science Foundation of Guangdong Province (No. 2017A030310575). Chao Zhou’s work is also supported by Singapore MOE (Ministry of Educations) AcRF Grants R-146-000-219-112, R-146-000-255-114, and R-146-000-271-112 as well as the French Ministry of Foreign Affairs and the Merlion programme. Xizhi Su acknowledges the financial support from Centre for Quantitative Finance at NUS.

\section*{Appendices}
\subsection{Derivation of PDE solutions.}
When the return process is a linear function of $ (A_t)_{t \in [0, T]} $, we follow the market model \eqref{ou_stock} and \eqref{ou_hidden} and solve for the utility maximization problem using PDE approach. The optimal control and value function can be characterized by HJB equation. When a unique classical solution exists for the PDE, we can apply the Ito's formula to verify the solution is the value function of the control problem. The optimal control is obtained as a byproduct. We now state the verification theorem for classical solutions.

Let $ w $ be a function in $ C^{1, 2}([0, T] \times \mathbb{R}^N) $ solution to the HJB equation:
\begin{align*}
\frac{\partial w}{\partial t}(t, x) + \underset{a \in \mathcal{A}}{\sup}\pr{\mathcal{L}^aw(t, x) + f(x, a)} &= 0, \quad (t, x) \in [0, T) \times \mathbb{R}^N, \\
w(T, x) &= g(x), \quad x \in \mathbb{R}^N.
\end{align*}

\textbf{Verification theorem.} Suppose there exists a measurable function $ \hat{a}(t, x), \; (t, x) \in [0, T] \times \mathbb{R}^N $, valued in $ \mathcal{A} $ attaining the supremum, i.e.
\begin{equation*}
\underset{a \in \mathcal{A}}{\sup} \pr{\mathcal{L}^a w(t, x) + f(x, a)} = \mathcal{L}^{\hat{a}(t, x)} w(t,x) + f(x, \hat{a}(t, x)),
\end{equation*}
such that the SDE
\begin{equation*}
dX_u = b(X_u, \hat{a}(u, X_u)) du + \sigma(X_u, \hat{a}(u, X_u)) dW_u
\end{equation*}
admits a unique solution denoted by $ \hat{X}_u^{t, x} $, $ t \leq u \leq T $, with the initial condition $ X_t = x $, and the process $ \hat{\alpha} = \{\hat{a}(u, \hat{X}_u^{t, x}), \: t \leq u \leq T \}$ lies in $ \mathcal{A} $, then $ w = v $, and $ \hat{\alpha} $ is an optimal feedback control.

\textbf{Partial information HJB equation.} For CARA utility, the solution $ V(t, x, y, \eta) $ is smooth. Hence the classical verification theorem applies, which states that if the HJB equation has a smooth solution, then the solution is the value function to the control problem. Let 
\[ V(t, x, y, \eta) = \underset{\pi \in \mathcal{A}}{\sup} \mathbb{E} \pr{J(t) \big | X_t = x, \tilde{X}_t = y, \hat{\mu}_t = \eta, \hat{\Sigma}(t) = \sigma_0} . \]
Omitting the script $ i $ when there is no ambiguity. The agent $ i $'s value function $ V(t, \eta, x, y) $ satistifes the HJB equation, 
\begin{multline*}
V_t + \underset{\pi_t}{\sup} \bigg\{\pi_t \eta V_x + \alpha_t^{-i} \eta V_y - \lambda (\eta - \bar{\mu}) V_\eta + \frac{1}{2} \pi_t^2 \sigma_S^2 V_{xx} +  \frac{1}{2} (\alpha_t^{-i})^2 \sigma_S^2 V_{yy} \\ + \frac{1}{2} \br{\frac{\hat{\Sigma}(t) + \sigma_S \sigma_\mu \rho}{\sigma_S}}^2 V_{\eta \eta} 
+ \alpha_t^{-i} \pi_t \sigma_S^2 V_{xy} + \pi_t \br{\hat{\Sigma}(t) + \sigma_S \sigma_\mu \rho} V_{x \eta}  \\ + \alpha_t^{-i} \br{\hat{\Sigma}(t) + \sigma_S \sigma_\mu \rho} V_{y \eta} \bigg\}= 0,
\end{multline*}
with terminal condition $ V(T, x, y, \eta) = - e^{-\frac{1}{\delta}((1 - \frac{\theta}{N})x - \theta y)} $.

By the first order condition, the optimal $ \pi_t $ is 
\begin{equation*}
\pi_t^* = - \frac{\eta V_x + \alpha^{-i}_t \sigma_S^2 V_{xy} + \br{\hat{\Sigma}(t) + \sigma_S \sigma_\mu \rho} V_{x \eta}}{\sigma_S^2 V_{xx}}.
\end{equation*}

Substitute $ \pi_t^* $ into the above equation, we obtain the PDE for value function,
\begin{multline*}
V_t + \alpha_t^{-i} \eta V_y - \lambda (\eta - \bar{\mu}) V_\eta + \frac{1}{2} (\alpha_t^{-i})^2 \sigma_S^2 V_{yy} + \frac{1}{2} \br{\frac{\hat{\Sigma}(t) + \sigma_S \sigma_\mu \rho}{\sigma_S}}^2 V_{\eta \eta} \\ - \frac{\br{\eta V_x + \alpha_t^{-i} \sigma_S^2 V_{xy} + (\hat{\Sigma}(t) + \sigma_S \sigma_\mu\rho) V_{x \eta}}^2}{2 V_{xx} \sigma_S^2} = 0,
\end{multline*}
with terminal condition $ V(T, x, y, \eta) = - e^{-\frac{1}{\delta}\br{\br{1 - \frac{\theta}{N}} x - \theta y}} $.

Make an ansatz $ V(t, x, y, \eta) = - e^{-\frac{1}{\delta}\br{\br{1 - \frac{\theta}{N}} x - \theta y}} f(t, \eta)$. The PDE for $ f(t, \eta) $ is given by 
\begin{multline*}
f_t + w_2 \alpha_t^{-i} \eta f - \lambda \br{\eta - \bar{\mu}} f_\eta + \frac{1}{2} w_2^2 (\alpha_t^{-i})^2 \sigma_S^2 f + \frac{1}{2} \br{\frac{\hat{\Sigma}(t) + \sigma_S \sigma_\mu \rho}{\sigma_S}}^2 f_{\eta \eta} \\
- \frac{\br{f + w_2 \alpha_t^{-i} \sigma_S^2 f + \br{\hat{\Sigma}(t) + \sigma_S \sigma_\mu\rho} f_{\eta}}^2}{2 \sigma_S^2 f} = 0
\end{multline*}
with $ f(T, \eta) = 1 $ and $ w_2 = \frac{\theta}{\delta} $. 

Further simplification gives
\begin{multline*}
f_t + \br{w_2 \alpha_t^{-i} \eta + \frac{1}{2} w_2^2 (\alpha_t^{-i})^2 \sigma_S^2 - \frac{\br{\eta + w_2 \alpha_t^{-i} \sigma_S^2}^2}{2 \sigma_S^2}} f + \frac{1}{2} \br{\frac{\hat{\Sigma}(t) + \sigma_S \sigma_\mu \rho}{\sigma_S}}^2 f_{\eta \eta} - \lambda  (\eta - \bar{\mu}) f_\eta \\ - \frac{\br{\eta + w_2 \alpha_t^{-i} \sigma_S^2}\br{\hat{\Sigma}(t) + \sigma_S \sigma_\mu\rho}}{\sigma_S^2} f_\eta - \frac{1}{2} \br{\frac{\hat{\Sigma}(t) + \sigma_S \sigma_\mu \rho}{\sigma_S}}^2 \frac{f_\eta^2}{f} = 0, 
\end{multline*}
with $ f(T) = 1 $. 

Set a transformation $ f(t, \eta) = e^{g(t, \eta)} $, then $ g(t, \eta) $ satisfies the PDE, 
\begin{multline*}
g_t + \frac{1}{2} \br{\frac{\hat{\Sigma}(t) + \sigma_S \sigma_\mu \rho}{\sigma_S}}^2 g_{\eta \eta} - \lambda (\eta - \bar{\mu}) g_\eta - \frac{\br{\eta + w_2 \alpha_t^{-i} \sigma_S^2}\br{\hat{\Sigma}(t) + \sigma_S \sigma_\mu\rho}}{\sigma_S^2} g_\eta \\ + w_2 \alpha_t^{-i} \eta + \frac{1}{2} w_2^2 (\alpha_t^{-i})^2 \sigma_S^2 - \frac{\br{\eta + w_2 \alpha_t^{-i} \sigma_S^2}^2}{2 \sigma_S^2} = 0,
\end{multline*}
with $ g(T, \eta) = 0 $.

For the above PDE that is second order in the variable $ \eta $, we make an ansatz that the solution is quadratic in $ \eta $ with coefficients as an integral with respect to $ t $: 
\begin{equation*}
g(t, \eta) = \int_{t}^{T}\pr{ A(t, s)\eta^2 + B(t, s) \eta + C(t, s)} ds
\end{equation*}
where $ A(t, s), B(t, s) $ and $ C(t, s) $ satisfy the following ODEs:

\begin{align*}
& \dot{A} - 2 \lambda A - \frac{2\br{\hat{\Sigma}(t) + \sigma_S \sigma_\mu \rho}}{\sigma_S} A = 0, \\
& \dot{B} - \lambda B - \frac{\hat{\Sigma}(t) + \sigma_S \sigma_\mu \rho}{\sigma_S^2} B + 2 \lambda \bar{\mu} A + 2 w_2 \alpha_t^{-i} \br{\hat{\Sigma}(t) + \sigma_S \sigma_\mu\rho} A = 0 ,  \\
& \dot{C} + \frac{\br{\Sigma(t) + \sigma_S \sigma_\mu\rho }^2}{\sigma_S^2} A + \lambda \bar{\mu} B + w_2 \alpha_t^{-i} \br{\hat{\Sigma}(t) + \sigma_S \sigma_\mu\rho} B = 0,
\end{align*}
with 
$A(t, t) = - \frac{1}{2 \sigma_S^2}, B(t, t) = C(t, t) = 0$. 

The solution to the ODE system is 
\begin{align}\label{A}
A(t, s) & = - \frac{1}{2 \sigma_S^2} e^{- 2 \int_{t}^{s} \br{\lambda + \frac{\hat{\Sigma}(u) + \sigma_S \sigma_\mu\rho}{\sigma_S} } du}, \\
\label{B}
B(t, s) & = l(t, s) e^{- \int_{t}^{s} \br{\lambda + \frac{\hat{\Sigma}(u) + \sigma_S \sigma_\mu\rho}{\sigma_S^2}} du}, 
\end{align}
for 
\begin{align}\label{l}
l(t, s) & = - \frac{1}{\sigma_S^2} \int_{t}^{s} \br{\lambda \bar{\mu} + w_2 \alpha_t^{-i} \br{\hat{\Sigma}(u) + \sigma_S \sigma_\mu\rho}} e^{- \int_{u}^{s} \br{\lambda + \frac{2\br{\hat{\Sigma}(m) + \sigma_S \sigma_\mu\rho}}{\sigma_S} - \frac{\hat{\Sigma}(m) + \sigma_S \sigma_\mu\rho}{\sigma_S^2} }dm} du,
\end{align}

and
\begin{align}
C(t, s) & = \int_{t}^{s} \br{\frac{(\hat{\Sigma}(u) + \sigma_S \sigma_\mu\rho)^2}{\sigma_S^2} A(u, s) + \br{\lambda \bar{\mu} + w_2 \alpha_t^{-i} (\hat{\Sigma}(u) + \sigma_S \sigma_\mu\rho)} B(u, s)} du.
\end{align}

Moreover, the strategy $ \pi^* $ is given by 
\begin{align*}\label{eqpi}
\pi_t^* & = \frac{\delta \eta + \br{\hat{\Sigma}(t) + \sigma_S \sigma_\mu\rho} \delta g_\eta(t, T) + \alpha_t^{-i} \sigma_S^2 \theta}{\sigma_S^2 (1 - \frac{\theta}{N})} \\\nonumber
& = \frac{\delta}{\sigma_S^2 (1 - \frac{\theta}{N})} \br{\eta + \br{\hat{\Sigma}(t) + \sigma_S \sigma_\mu\rho} \int_{t}^{T}\br{2 A(t, s) \eta + B(t, s)} ds} + \frac{\theta}{1 - \frac{\theta}{N}} \alpha^{-i}_t.
\end{align*}

More explicitly, 
\begin{multline*}
\pi^* = \frac{\delta}{\sigma_S^2 \br{1 - \frac{\theta}{N}}}\bigg\{\eta + \br{\hat{\Sigma}(t) + \sigma_S \sigma_\mu\rho} \int_{t}^{T} - \frac{1}{\sigma_S^2} e^{- 2 \int_{t}^{s} \br{\lambda + \frac{\hat{\Sigma}(u) + \sigma_S \sigma_\mu\rho}{\sigma_S}} du} \eta \\ + \frac{\hat{\Sigma}(u) + \sigma_S \sigma_\mu\rho}{\sigma_S^2} \int_{t}^{T} \int_{t}^{s} \lambda \bar{\mu} e^{- \int_{u}^{s} \br{\lambda + \frac{2(\hat{\Sigma}(m) + \sigma_S \sigma_\mu\rho)}{\sigma_S} - \frac{\hat{\Sigma}(m) + \sigma_S \sigma_\mu\rho}{\sigma_S^2}} dm} du  \\ e^{- \int_{t}^{s} \br{\lambda + \frac{\hat{\Sigma}(u) + \sigma_S \sigma_\mu\rho}{\sigma_S^2}} du} ds \bigg\}
+ \frac{\delta}{\sigma_S^2 (1 - \frac{\theta}{N})} \Bigg\{ \frac{ w_2^i \br{\hat{\Sigma}(t) + \sigma_S \sigma_\mu\rho} }{\sigma_S^2} \int_{t}^{T} e^{- \int_{t}^{s} (\lambda + \frac{\hat{\Sigma}(u) + \sigma_S \sigma_\mu\rho}{\sigma_S^2} )du} \\
\cdot \int_{t}^{s}  \br{\hat{\Sigma}(u) + \sigma_S \sigma_\mu\rho} e^{- \int_{u}^{s} \br{\lambda + \frac{2(\hat{\Sigma}(m) + \sigma_S \sigma_\mu\rho)}{\sigma_S} - \frac{\hat{\Sigma}(m) + \sigma_S \sigma_\mu\rho}{\sigma_S^2}} dm} du ds + \frac{\theta}{1 - \frac{\theta}{N}}\Bigg\} \alpha^{-i}_t .
\end{multline*}

\vspace{0.5em}

\subsection{Appendix B}
The deep learning results with respect to different learning rates are shown in the figure below. 

\begin{figure}[tb]
	\centering
	\includegraphics[width=0.9\linewidth]{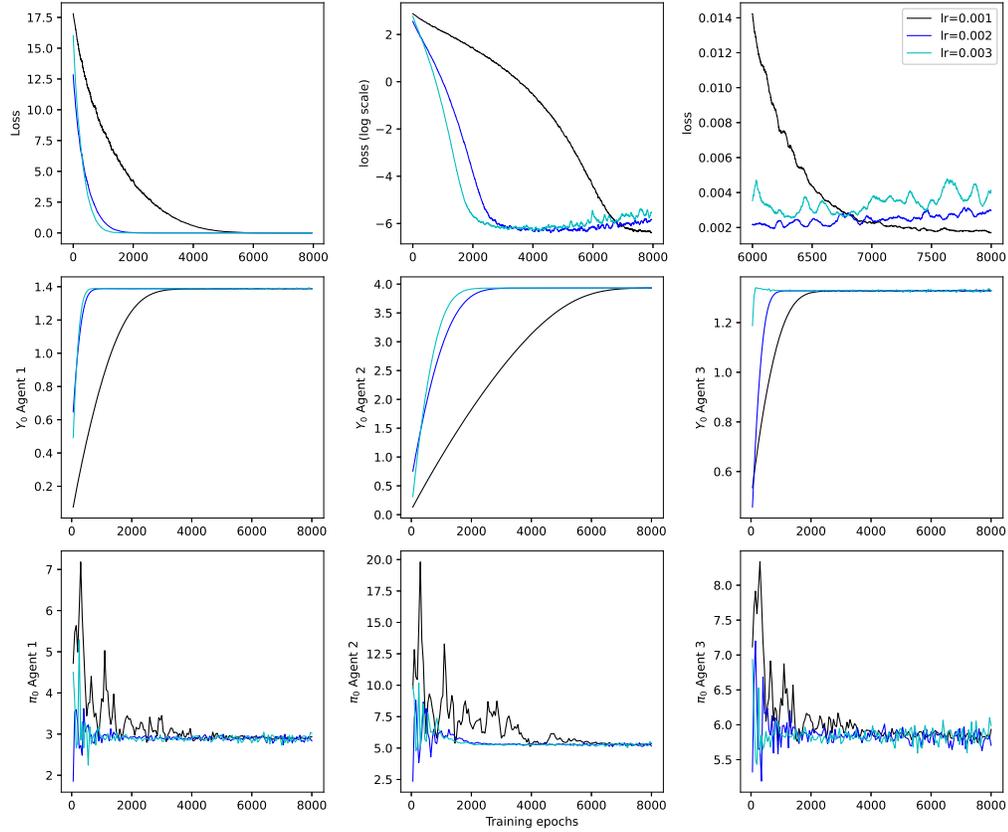}
	\caption{The convergence of FBSDE solutions with respect to the training epochs with different (constant without decay) learning rates. The top panel shows the loss quantity, with the left two figures showing the loss in the ordinary scale and the log scale, respectively. The right-most figure shows the loss for the last $2000$ training epochs. The mid row is the initial $ Y $ values corresponding to the components of the FBSDE solution in $ \mathbb{R}^N $. From the left to the right, it is the first, second and the third component, respectively. The last row shows the initial investment amount for agent $ 1 $, agent $ 2 $ and agent $ 3 $, respectively from the left to the right.}
	\label{fig:lr}
\end{figure}

\clearpage

\bibliographystyle{apa}
\bibliography{document}

\end{document}